  \theoremstyle{plain}
  \newtheorem{theorem}{Theorem}
  \newtheorem{lemma}{Lemma}
\begin{document}
%
%
\title{Constrained Bimatrix Games in  Wireless Communications}
\author{Koorosh~Firouzbakht,
        Guevara~Noubir,
        Masoud~Salehi
        \thanks{K.~Firouzbakht and M.~Salehi (Emails: \{kfirouzb, salehi\}@ece.neu.edu) are with the Electrical and Computer Engineering Department, Northeastern University, Boston, MA 02115.}%
	\thanks{G.~Noubir (Email: noubir@ccs.neu.edu) is with the College of Computer and Information Science, Northeastern University, Boston, MA 02115. }%
	\thanks{Research partially supported by NSF Award CNS-0915985.}
	\thanks{This paper was presented in parts at the 48th Asilomar Conference on Signal, Systems and Computers, 2014 and IEEE GLOBECOME 2014.}%
	\thanks{This work has been submitted to the IEEE for possible publication. Copyright may be transferred without notice, after which this version may no longer be accessible.}%
}
%
%
\markboth{Submitted to IEEE Transactions on Wireless Communications}%
{Shell \MakeLowercase{\textit{K.~Firouzbakht, G.~Noubir and M.~Salehi}}: Paper Title}
\maketitle
%
%
\markboth{Submitted to IEEE Transactions on Communications, 2015}%
{Submitted paper}
\maketitle
%
%
\begin{abstract}
We develop a constrained bimatrix game framework that can be used to model many practical problems in many disciplines, including jamming in packetized wireless networks. 
In contrast to the widely used zero-sum framework, in bimatrix games it is no longer required that the sum of the players' utilities to be zero or constant, thus, can be used to model a much larger class of jamming problems.
Additionally, in contrast to the standard bimatrix games, in constrained bimatrix games the players' strategies must satisfy some linear constraint/inequality, consequently, not all strategies are feasible and the existence of the Nash equilibrium (NE) is not guaranteed anymore. 
We provide the necessary and sufficient conditions under which the existence of the Nash equilibrium is guaranteed, and show that the equilibrium pairs and the Nash equilibrium solution of the constrained game corresponds to the global maximum of a quadratic program. Finally, we use our game theoretic framework to find the optimal transmission and jamming strategies for a typical wireless link under power limited jamming. 
\end{abstract}
\begin{IEEEkeywords}
Wireless communications, jamming, adaptation, game theory, constrained games.
\end{IEEEkeywords}
%
%
%
\section{Introduction}
\label{Sec_1_Introduction}
%
%
%

\IEEEPARstart{T}{he} 
convenience of wireless mobile communication has revolutionized the way we access information services and interact with the physical
world. Beyond enabling mobile devices to access information and data services ubiquitously,
wireless technology is  widely used in cyber-physical systems such as air-traffic control, power
plants synchronization, transportation systems, navigation systems and human body implantable devices. This
pervasiveness has elevated wireless communication systems to the level of critical infrastructure. Nevertheless, security issues of wireless communications remain a serious concern.

Physical layer in wireless networks is a broadcast medium that is subjected to adversaries. 
Among the many security threats that the wireless networks are subject to, jamming at the physical layer is one of the most prominent and challenging threats. 
Physical layer jamming not only can lead to service interruption/degradation or denial of service, but it is often a prelude
to other upper layer attacks such as spoofing, man in the middle and downgrade attacks~\cite{pelechrinis2011denial, vo2014cbm,  wu2007survey, vo2013counter, rengaraju2009analysis}.

Furthermore, many modern wireless networks such as sensor, ad-hoc and mesh networks often operate in a decentralized, self-configurable fashion. Network nodes are governed by a distributed protocol which allows the nodes to choose an action, i.e., make a \emph{decision}, from a set of available actions based on their evaluation of the network conditions (possibly relying on the information provided by the other nodes). These decisions not only have impact on the performance of individual nodes but may have impact on the overall performance of the entire network. 
Nodes may seek the \emph{greater good} of the network that is, they seek actions that optimize the overall performance of the entire network, or they can act \emph{selfishly} and compete with other nodes to optimize their individual performance. 
Additionally, nodes may act \emph{maliciously}, i.e,
seek actions that result in performance degradation of the individual nodes or the entire network.

All these examples have many of the characteristics that would lead to a natural game theoretic formulation as these problems cannot be completely modeled by the traditional optimization tools. Moreover, as software defined radios (SDRs) and Cognitive Radios (CRs) become more capable of implementing more sophisticated and complicated adaptation algorithms, the assumptions of game theoretic models become an even better match for future wireless networks.

Game theory has been used to solve problems in numerous aspects of wired and wireless communication systems --- from security at the physical and MAC layer (e.g., jamming and eavesdropping) to routing and intrusion detection systems (e.g., collaborative IDS's) at upper layers of the protocol stack. 
\cite{manshaei2013game} and the references therein provide a structured and comprehensive overview of the game-theoretic approach to security and privacy in computer and communications networks. In \cite{Charilas_2010}, the authors use a layered approach to survey applications of game theory in wireless networking and game models that are most suitable for each problem.

Reference \cite{Liang_2013} presents a classification of applications of game theory in network security based on the game model that is used to approach the problem. The survey covers both cooperative and non-cooperative games (see Table I in \cite{Liang_2013}).
In what follows, we briefly review some of the applications of game theory in wireless communications. We limit our focus to the physical layer applications, 
for applications of game theory in other network layers we refer the reader to~\cite{manshaei2013game} and~\cite{Charilas_2010}.

%
%
\subsubsection{Dynamic Spectrum Access (DSA)}
addresses the issue of how to allocate the limited available spectrum among multiple wireless devices. This problem has two important aspects, spectrum usage efficiency and fairness to wireless users and can be modeled as a cooperative or non-cooperative game. Cooperative models such as \emph{bargaining games} and \emph{coalition games} are often used where cooperation between spectrum users can result in an equilibrium that is more efficient and fair~\cite{ji2007cognitive, saad2009coalitional}.

On the other hand, when collaboration between network nodes is not possible or permitted for instance, when secondary users compete for channel or spectrum access in cognitive radio networks, non-cooperative frameworks (e.g. \emph{auction-based games}) are used to model this problem~\cite{wang2010game}. 
In cognitive radio networks, the wireless spectrum is shared between primary users (users that are licensed to operate in that spectrum band) and secondary users (unlicensed users) 
where the secondary users access the spectrum in an \emph{opportunistic manner}. In this scenario, the secondary users compete with each other to access the bandwidth offered by the primary users and therefore; spectrum sharing in better modeled under the non-cooperative framework \cite{huang2006auction, wu2009scalable}.

%
%
\subsubsection{Power control in CDMA networks}
is another example that has been studied by game theoretic approaches~\cite{meshkati2006game, chiang2008power, menon2009game}. In CDMA power control problem, a player's utility function is usually defined such that it increases with the signal to interference plus noise ratio (SINR) but decreases with transmission power~\cite{goodman2000power}. This assumption is well justified since an increase in the SINR results in lower error probability and hence would increase the quality of service. Assuming all network nodes operate at fixed transmission power, an increase in a node's transmission power, results in higher SINR for that node. However, increasing transmission power results in lower SINR for other users. Because most wireless nodes are battery-operated, energy management is an important consideration and if a node's power is too high, not only it reduces other nodes' SINR but it also wastes valuable battery life.

%
%
\subsubsection{Power control in OFDMA networks}
Game theory has also been used to study power control in Orthogonal Frequency Devision Multiple Access (OFDMA) networks. In OFDMA networks, the objective is to minimize the overall transmission power under rate and power constraints by allocating users' rates and powers to the available sub-channels.

This problem has been studied in the literature in non-cooperative \cite{han2007non, jing2009distributed} and cooperative frameworks \cite{yang2010optimal}. In non-cooperative framework network nodes use local and selfish power control strategies to maximize their individual performances while in the cooperative framework, network nodes use distributed and (possibly) selfless power control strategies to optimize the overall performance and fairness of the system. It has been shown that game theoretic power control strategies can achieve significant individual and/or overall performance improvements over traditional power control algorithms~\cite{meshkati2006game}.

%
%
\subsubsection{The jamming problem} among the many security threats that the wireless networks are subject to, jamming at the physical
layer is one of the most prominent and challenging threats. 
Jamming at the physical layer is often modeled as a zero-sum game, a special class of non-cooperative games~\cite{pelechrinis2010efficacy, zhou2011optimizing, wang2011anti, alpcan2011security, Koorosh_2013}.
In a zero-sum game, for all strategy profiles the sum of players' payoffs is zero and as a result, if a player gains a payoff, that payoff must have been lost by other player(s). Jamming in wireless communication is one such case as the players, the communicating nodes and the jammer, have completely conflicting goals.

In zero-sum game framework, it is usually assumed that the players have \emph{perfect knowledge} of the game and the actions that are available to the other players, and they use this knowledge to compute their respective optimal strategies. 
In such a case, the zero-sum framework fully captures the conflicting goals of the players. Moreover, the equilibrium solution of the zero-sum game 
guarantees a minimum payoff regardless of the other player's strategy~\cite{Owen}.

%
%
\begin{figure}
	\centering
	\includegraphics[width = .6\linewidth]{./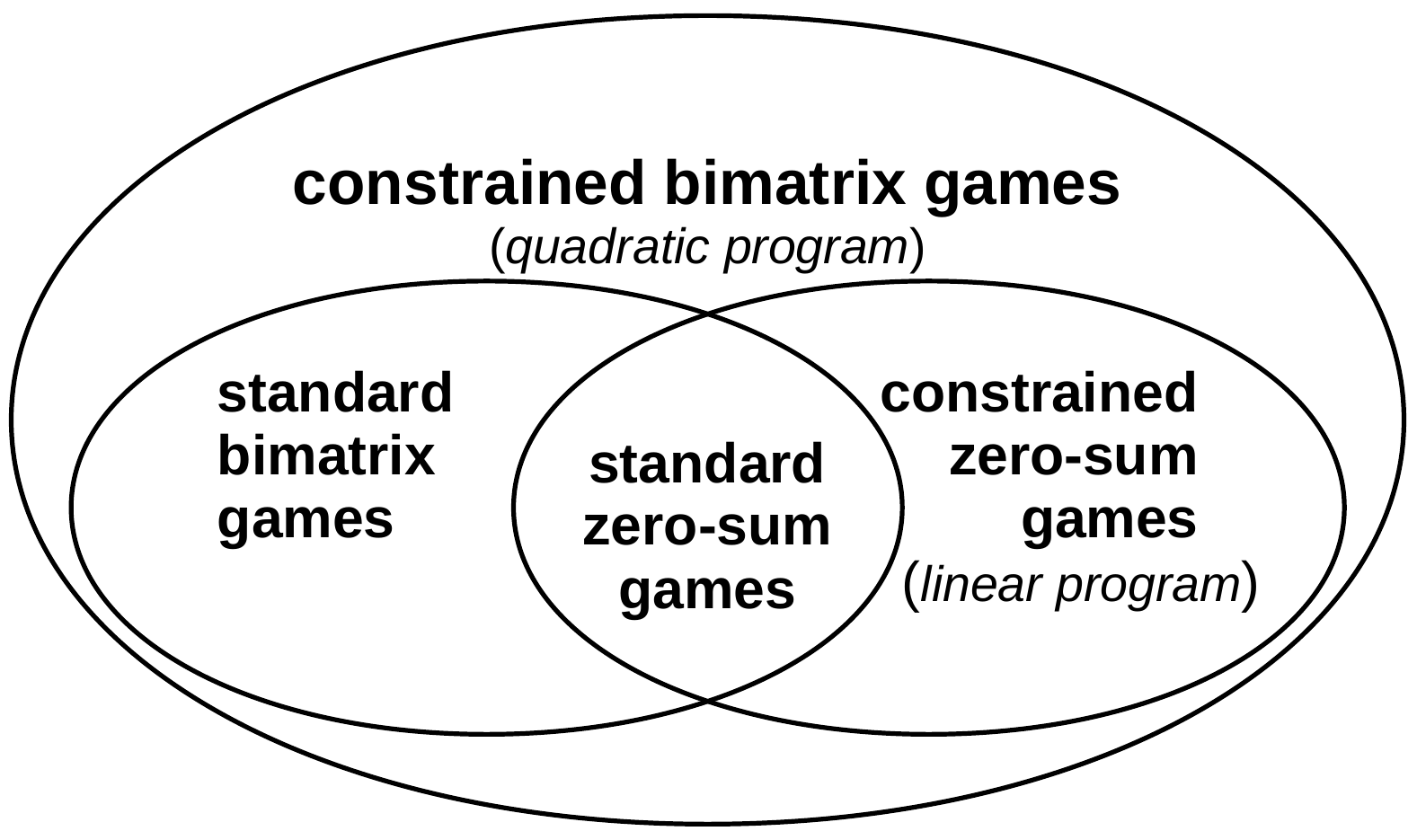}
	\caption{Classification of standard and constrained games.}
	\label{Fig_Bimatrix_and_Zerosum_Relation}
\end{figure}
%
%

However, in some jamming scenarios, having perfect knowledge of the system parameters (or available actions) may not be a feasible option or too costly for a player. In addition, players may have objectives that are not exactly the opposite of each other, for example, the transmitter may wish to minimize the average error probability while the jammer wishes to minimize the average throughput of the system (as opposed to maximizing the average error probability). 

In such scenarios, a more appropriate framework to model the communication system under jamming would be a  \emph{bimatrix} game instead of a zero-sum game%
\footnote{It can be shown that zero-sum games ate special cases of the more general bimatrix games.}. 
In bimatrix games it is no longer required that the sum of the players' payoffs to be zero (or a constant value)~\cite{Owen}. As a result, players can have different objectives and the respective payoffs can be defined based on the players' goals and their knowledge of the game (which in general may be imperfect). 
Such a formulation, encompasses a variety of situations from full competition to full cooperation.

Additionally, in standard zero-sum and bimatrix games there are no additional restrictions on players' mixed-strategies, i.e., players may choose any probability distribution over their respective action sets (pure-strategies). 
However, there exist scenarios where, due to practical reasons, not all mixed-strategies are permitted and/or feasible.

Such scenarios demand for a more general framework to study them.
In this paper, we study a \emph{constrained bimatrix} game to overcome these limitations. In constrained games, the players' mixed-strategies not only have to be a probability distributions but they must satisfy some additional constraints too
(Figure~\ref{Fig_Bimatrix_and_Zerosum_Relation} shows the classification of standard and constrained games). We study the necessary and sufficient conditions under which the existence of the Nash equilibrium is guaranteed as well as a systematic approach to find the NE.

The rest of the paper is organized as follows, in Section~\ref{Sec_2_Constrained_Bimatrix_Games} we will introduce the \emph{constrained bimatrix} framework
and provide the necessary and sufficient conditions under which the existence of a constrained NE solution is guaranteed.
In Section~\ref{Sec_3_Quadratic_Programming} we show that the solution of the this constrained game corresponds to the global maximizers of a \emph{quadratic program}. In Section~\ref{Sec_4_Case_Study} we will use the framework that we developed to study a typical jamming problem. Finally, we conclude the paper in Section~\ref{Sec_5_Conclusion}.

%
%
\section{Constrained Bimatrix Games}
\label{Sec_2_Constrained_Bimatrix_Games}
We start by introducing the concept of the Nash equilibrium (NE) for standard bimatrix games. 
Then, we generalize the standard bimatrix framework by adding linear constraints on the players strategies and formulate the \emph{constraint bimatrix} framework. 
We refer the reader to \cite{Felegyhazi_2006}  and the references therein for an introduction to some of the most fundamental concepts of non-cooperative game theory. This tutorial is specifically written for wireless network engineers and uses intuitive examples that are focused on wireless networks. \cite{Mackenzie_2006} provides a more comprehensive review of non-cooperative game theory and its applications in wireless communications and networking.
Throughout the rest of the paper, we refer to player one (row player) as the \emph{transmitter} and refer to player two (column player) as the \emph{jammer}. Nevertheless, applications of our framework is not limited to jamming in wireless communications.

Consider a bimatrix game where transmitter's action set (for instance, transmission rates) is given by
\begin{equation}
\label{Eq_Player_1_Action_Set}
 \mathcal{R} = \big\{ r_1,r_2, \cdots, r_m \big\} \qquad r_i \in \mathbb{R}_+,\ 1\leq i \leq m
\end{equation}
Without loss of generality assume $\mathcal{R}$ is a sorted set, i.e., $0 \leq r_1 < \cdots < r_{i-1} < r_i < \cdots < r_m$. Transmitter's vector of possible actions (simply, \emph{action vector}) is the column vector $\mathbf{r}$ defined as
\begin{equation}
\label{Eq_Player_1_Action_Vector}
 \mathbf{r}^T = \left[r_1 \ r_2 \ \cdots \ r_i \ \cdots r_m \right]_{1\times m}
\end{equation}
where $^T$ indicates matrix transposition. 
Similarly, we define the jammer's action set and action vector as
\begin{equation}
\label{Eq_Player_2_Action_Set}
 \mathcal{J} = \big\{ j_1,j_2, \cdots, j_n \big\} \qquad j_k \in \mathbb{R}_+,\ 1\leq k \leq n
\end{equation}
and
\begin{equation}
\label{Eq_Player_2_Action_Vector}
 \mathbf{j}^T = \left[j_1 \ j_2 \ \cdots \ j_k \ \cdots j_n \right]_{1\times n}
\end{equation}
where WLOG we assume, $0 \leq j_1 < \cdots < j_{k-1} < j_k < \cdots < j_n$. 
A \emph{standard bimatrix game} (also known as two-player general sum game) is defined by a pair of $m\times n$ matrices $A$ and $B$ such that, if player one plays row $i$ and player two plays column $k$, the elements at row $i$ and column $k$ of the matrices $A$ and $B$  (i.e., $a_{ik}$ and $b_{ik}$) would be the payoffs received by players one and two, respectively.

If we let the players randomize their actions (i.e., allow them to use mixed-strategies\footnote{A player is playing a mixed-strategy if he randomizes his actions over his action set according to a probability distribution.}), the expected payoffs of the game for the mixed strategy profile $(\mathbf{x}, \mathbf{y})$ are
\begin{equation}
 \begin{array}{ll}
   A \big( \mathbf{x}, \mathbf{y} \big) \triangleq \mathbf{x}^T A \ \mathbf{y} &  \qquad \text{for player I} \\
   B \big( \mathbf{x}, \mathbf{y} \big) \triangleq \mathbf{x}^T B \ \mathbf{y} & \qquad  \text{for player II} \\
 \end{array}
\end{equation}
where $\mathbf{x}\in \mathbf{X}^m \triangleq \big\{\mathbf{x} \in \mathbb{R}^m_+ \big| \sum_{i=1}^m x_i = 1 \big\}$ and $\mathbf{y}\in \mathbf{Y}^n \triangleq \big\{\mathbf{y} \in \mathbb{R}^n_+ \big| \sum_{k=1}^n y_k = 1 \big\}$ are mixed-strategy vectors of players one and two, respectively.
Player one's goal is to find an optimal strategy, $\mathbf{x}$, that maximize his expected payoff (given by payoff matrix $A$) against player two's strategy, $\mathbf{y}$, i.e., player one wants to solve the following problem
\begin{equation}
\label{Eq_Player_1_Goal}
 \underset{\mathbf{x} \in \mathbf{X}^m}{\text{maximize}}\ A \big( \mathbf{x}, \mathbf{y} \big) 
 \quad 
 \text{for all}\quad \mathbf{y} \in \mathbf{Y}^n
\end{equation}
while player two's goal is to maximize his own payoff (given by payoff matrix $B$) by solving the following problem
\begin{equation}
\label{Eq_Player_2_Goal}
 \underset{\mathbf{y} \in \mathbf{Y}^n}{\text{maximize}}\ B \big( \mathbf{x}, \mathbf{y} \big) 
 \quad 
 \text{for all}\quad \mathbf{x} \in \mathbf{X}^m
\end{equation}
The strategy profile $\big( \mathbf{x}^*, \mathbf{y}^* \big)$ is said to be an equilibrium pair (or equivalently the Nash equilibrium, NE) if $(\mathbf{x}^*, \mathbf{y}^*)$ satisfies \eqref{Eq_Player_1_Goal} and \eqref{Eq_Player_2_Goal} simultaneously. 
That is, $\mathbf{x}^*$ maximizes~\eqref{Eq_Player_1_Goal} for $\mathbf{y}^*$ and $\mathbf{y}^*$ maximizes~\eqref{Eq_Player_2_Goal} for $\mathbf{x}^*$,
and therefore, no player benefits by unilaterally changing his strategy.

\begin{theorem}
\label{Theorem_1}
 Every finite bimatrix game in its standard form has at least one equilibrium pair (Nash equilibrium) in mixed-strategies.
\end{theorem}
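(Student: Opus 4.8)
The plan is to derive this from a fixed-point theorem, following Nash's classical argument. First I would record the structural facts that make the machinery apply: the strategy sets $\mathbf{X}^m$ and $\mathbf{Y}^n$ are nonempty, compact, convex simplices in Euclidean space, so their product $S \triangleq \mathbf{X}^m \times \mathbf{Y}^n$ is again nonempty, compact, and convex, and the payoff functions $A(\mathbf{x},\mathbf{y}) = \mathbf{x}^T A\,\mathbf{y}$ and $B(\mathbf{x},\mathbf{y}) = \mathbf{x}^T B\,\mathbf{y}$ are continuous (in fact bilinear) on $S$. I would then take the self-contained Brouwer route. For each pure strategy $i$ of player one, define the gain function $\varphi_i(\mathbf{x},\mathbf{y}) \triangleq \max\{0,\ A(\mathbf{e}_i,\mathbf{y}) - A(\mathbf{x},\mathbf{y})\}$, measuring how much player one would improve by switching to row $i$ (here $\mathbf{e}_i$ is the $i$th unit vector), and define $\psi_k(\mathbf{x},\mathbf{y})$ symmetrically for each column $k$ of player two using $B$. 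Each $\varphi_i$ and $\psi_k$ is continuous and nonnegative.

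Next I would define the map $T\colon S \to S$ by $T(\mathbf{x},\mathbf{y}) = (\mathbf{x}',\mathbf{y}')$ with
\[
x'_i = \frac{x_i + \varphi_i(\mathbf{x},\mathbf{y})}{1 + \sum_{\ell}\varphi_\ell(\mathbf{x},\mathbf{y})}, \qquad
y'_k = \frac{y_k + \psi_k(\mathbf{x},\mathbf{y})}{1 + \sum_{\ell}\psi_\ell(\mathbf{x},\mathbf{y})},
\]
which nudges probability mass toward the currently advantageous pure strategies. A short verification shows $\mathbf{x}' \in \mathbf{X}^m$ and $\mathbf{y}' \in \mathbf{Y}^n$ (nonnegative entries summing to one) and that $T$ is continuous on the compact convex set $S$. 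Brouwer's fixed-point theorem then yields a fixed point $(\mathbf{x}^*,\mathbf{y}^*) = T(\mathbf{x}^*,\mathbf{y}^*)$.

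The step that needs the most care — and the crux of the proof — is showing that at this fixed point every gain vanishes. Since $A(\mathbf{x}^*,\mathbf{y}^*) = \sum_i x_i^* A(\mathbf{e}_i,\mathbf{y}^*)$ is a convex combination of the numbers $A(\mathbf{e}_i,\mathbf{y}^*)$, some row $i$ with $x_i^* > 0$ satisfies $A(\mathbf{e}_i,\mathbf{y}^*) \le A(\mathbf{x}^*,\mathbf{y}^*)$, i.e. $\varphi_i(\mathbf{x}^*,\mathbf{y}^*) = 0$; the fixed-point identity $x_i^*\big(1 + \sum_\ell \varphi_\ell(\mathbf{x}^*,\mathbf{y}^*)\big) = x_i^* + \varphi_i(\mathbf{x}^*,\mathbf{y}^*)$ then forces $\sum_\ell \varphi_\ell(\mathbf{x}^*,\mathbf{y}^*) = 0$, so every $\varphi_\ell$ is zero, and the same argument applied to $B$ gives $\psi_k(\mathbf{x}^*,\mathbf{y}^*) = 0$ for all $k$. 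Finally, $\varphi_i(\mathbf{x}^*,\mathbf{y}^*) = 0$ for all $i$ means $A(\mathbf{e}_i,\mathbf{y}^*) \le A(\mathbf{x}^*,\mathbf{y}^*)$ for every $i$, and since any $\mathbf{x} \in \mathbf{X}^m$ is a convex combination of the $\mathbf{e}_i$ this yields $A(\mathbf{x},\mathbf{y}^*) \le A(\mathbf{x}^*,\mathbf{y}^*)$ for all $\mathbf{x}$, which is exactly \eqref{Eq_Player_1_Goal} at $\mathbf{y}^*$; the $\psi_k$ conditions give \eqref{Eq_Player_2_Goal} at $\mathbf{x}^*$ the same way, so $(\mathbf{x}^*,\mathbf{y}^*)$ is an equilibrium pair. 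I would also mention the alternative Kakutani route — the joint best-response correspondence on $S$ has nonempty convex compact values and a closed graph (by Berge's maximum theorem, using continuity of the payoffs), hence a fixed point, which is a Nash equilibrium by definition — where the main obstacle is instead the verification of upper hemicontinuity.
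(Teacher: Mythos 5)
Your proposal is correct, and it is essentially the paper's approach: the paper's ``proof'' is just a citation to Nash (1951), and what you have written out is precisely Nash's gain-function/Brouwer argument from that reference (with all verifications --- continuity of $T$, invariance of the simplices, and the vanishing of the gains at the fixed point --- done correctly). Your closing remark about the Kakutani best-response route is also apt, since that is exactly the technique the paper later uses in its Appendix to establish existence for the \emph{constrained} game, where the Brouwer-style argument no longer extends directly.
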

\begin{proof}
 See Theorem 1 in \cite{Nash51}.
\end{proof}

%
\begin{table*}[t]
\centering \caption{Necessary and Sufficient Conditions for a Pair $(\mathbf{x}^*, \mathbf{y}^*)$ to Be the NE.}
\label{Table_KKT_Conditions_for_Bimatrix_Game}
\renewcommand{\arraystretch}{1.5}
 \begin{tabular}[t]{lr||lr}
  \multicolumn{2}{c}{ Player I } &\multicolumn{2}{c}{Player II} \\
  \hline
  \hline 
  $\mathbf{1}^T \mathbf{x}^* - 1 = 0$ 			&(I.1)&$ \mathbf{1}^T \mathbf{y}^* - 1 = 0 $&(II.1)\\
  $\mathbf{r}^T \mathbf{x}^* - r_{\mathrm{ave}} \leq 0$	&(I.2)&$ \mathbf{j}^T \mathbf{y}^* - j_{\mathrm{ave}} \leq 0$ &(II.2)\\
  $-\mathbf{x}^* \leq 0$ 					&(I.3)&$ -\mathbf{y}^* \leq 0$ &(II.3)\\
  \hline
  $A\mathbf{y}^* - u \mathbf{r}_{m\times1} - \alpha \mathbf{1}_{m\times1} \leq \mathbf{0}$ &(I.4)& 
  ${\mathbf{x}^*}^T B - v \mathbf{J}_{n\times1} - \beta \mathbf{1}_{n\times1} \leq \mathbf{0}$ &(II.4)\\
  \hline
  ${\mathbf{x}^*}^T A\ \mathbf{y}^* - u r_{\mathrm{ave}} - \alpha =0$ &(I.5)&
  ${\mathbf{x}^*}^T B\ \mathbf{y}^* - v j_{\mathrm{ave}} - \beta =0$ &(II.5)\\
  $u\left( \mathbf{r}^T \mathbf{x}^* - r_{\mathrm{ave}} \right) = 0$ &(I.6)& 
  $v\left( \mathbf{j}^T \mathbf{y}^* - j_{\mathrm{ave}} \right) = 0$ &(II.6)\\
  \hline
  $u \geq 0, \ \alpha \in \mathbb{R}$ 			&(I.7)& $v \geq 0, \ \beta \in \mathbb{R}$ &(II.7)\\
 \end{tabular}
\end{table*}
%
%
 
Consider a bimatrix game for which, due to practical reasons, not all mixed-strategies are permitted and/or are feasible. 
For instance, assume maximizing the average throughput of a wireless link.
Maximizing the average throughput requires using higher transmission rates; but to maintain an acceptable error rate at the receiver, higher rates must be transmitted at higher transmission power. Because of battery limitation (internal limitation) or the FCC regulations (external limitations), the transmitter must keep its average transmission power below a certain value. Consequently, the wireless user cannot use certain actions that are more preferable to him (such as transmitting at the highest rate all the time). He may only choose actions that result in an average transmission power less than or equal to a predetermined value. 

Assume the mixed-strategy pair $\mathbf{x}$ and $\mathbf{y}$ must be chosen from some hyperpolyhedron defined by linear inequalities,
\begin{equation}
\label{Eq_Player_1_Constrained_Strategy_Set}
 \mathbf{x} \in \widehat{\mathbf{X}} \triangleq \big\{ \mathbf{x}\in \mathbf{X}^m \ \big| \ \mathbf{r}^T \mathbf{x} \leq r_{\mathrm{ave}} \big\}
\end{equation}
and
\begin{equation}
\label{Eq_Player_2_Constrained_Strategy_Set}
 \mathbf{y} \in \widehat{\mathbf{Y}} \triangleq \big\{ \mathbf{y} \in \mathbf{Y}^n \ \big| \ \mathbf{j}^T \mathbf{y} \leq j_{\mathrm{ave}} \big\}
\end{equation}
we denote this constraint game by $\mathcal{G} = \big( A,B, $ $\mathbf{r}, \mathbf{j}, r_{\mathrm{ave}}, j_{\mathrm{ave}} \big)$ where $A,B \in \mathbb{R}^{m\times n}$ are the payoff matrices for player one and two, respectively, and weight vectors $\mathbf{r}$ and $\mathbf{j}$ are given by \eqref{Eq_Player_1_Action_Vector} and \eqref{Eq_Player_2_Action_Vector}, respectively%
\footnote{This is different from the bimatrix game in its standard form where the payoff matrices $A$ and $B$ completely define the game.}%
.
It is easily verified that for $r_{\mathrm{ave}} \geq \max r_i$ and $j_{\mathrm{ave}} \geq \max j_k$ the constrained game simplifies to the bimatrix game in its standard form; hence, the unconstrained game can be viewed as a special case of the constrained games (see Figure~\ref{Fig_Bimatrix_and_Zerosum_Relation}).
Therefore, in the following we assume that at least one of the following inequalities holds
\begin{equation}
\label{Eq_Average_Constraint}
 r_{\mathrm{ave}} < \max r_i \quad \text{or} \quad j_{\mathrm{ave}} < \max j_k
\end{equation}

By introducing \eqref{Eq_Player_1_Constrained_Strategy_Set} and \eqref{Eq_Player_2_Constrained_Strategy_Set}, and assuming that \eqref{Eq_Average_Constraint} holds, we are eliminating some mixed-strategies that could have been otherwise selected. 
Therefore, the existence of the NE solution for this constrained bimatrix game is not trivial and must be established (see  Appendix).

Assuming that in the constrained bimatrix game $\mathcal{G}=$ $\big( A,B, $ $\mathbf{r}, \mathbf{j},$ $ r_{\mathrm{ave}}, j_{\mathrm{ave}} \big)$, the jammer is playing his optimal strategy $\mathbf{y}^*$. Transmitter's optimal strategy, $\mathbf{x}^*$, against $\mathbf{y}^*$ is the maximizer of the following problem
\begin{equation}
\label{Eq_Player_1_Constrained_Problem}
 \underset{\mathbf{x}}{\text{maximize}}\ \mathbf{x}^T A\ \mathbf{y}^* 
 \quad 
 \text{s.t.} \ 
 \left\{
 \begin{array}{l}
  \mathbf{1}^T \mathbf{x} - 1 = 0 			\\
  \mathbf{r}^T \mathbf{x} - r_{\mathrm{ave}} \leq 0 	\\
  - \mathbf{x} \leq 0 					\\
 \end{array}
 \right.
\end{equation}
Similarly, jammer's optimal strategy, $\mathbf{y}^*$, against $\mathbf{x}^*$ is the maximizer of the following problem
\begin{equation}
\label{Eq_Player_2_Constrained_Problem}
 \underset{\mathbf{y}}{\text{maximize}}\ {\mathbf{x}^*}^T B\ \mathbf{y} 
 \quad 
 \text{s.t.} \ 
 \left\{
 \begin{array}{l}
  \mathbf{1}^T \mathbf{y} - 1 = 0 			\\
  \mathbf{j}^T \mathbf{y} - j_{\mathrm{ave}} \leq 0 	\\
  - \mathbf{y} \leq 0 					\\
 \end{array}
 \right.
\end{equation}

Individually, \eqref{Eq_Player_1_Constrained_Problem} and \eqref{Eq_Player_2_Constrained_Problem} are linear programs, but $\mathbf{x}^*$ and $\mathbf{y}^*$ are not known in advance or in general they may not even exist.
Theorem~\ref{Theorem_2} gives the necessary and sufficient conditions that any Nash equilibrium solution of $\mathcal{G}$ must satisfy.
That is, every NE solution of $\mathcal{G}$ satisfies the conditions in Table~\ref{Table_KKT_Conditions_for_Bimatrix_Game} and every strategy pair $(\mathbf{x}, \mathbf{y})$ that satisfies the conditions in Table~\ref{Table_KKT_Conditions_for_Bimatrix_Game} must be a NE.
Additionally, in the Appendix
, we prove the conditions under which existence of the NE for this constrained bimatrix game is guaranteed.

\begin{theorem}
\label{Theorem_2}
  Let $\mathcal{G} = \big( A,B, $ $\mathbf{r}, \mathbf{j}, r_{\mathrm{ave}}, j_{\mathrm{ave}} \big)$ be a  constrained bimatrix game defined by matrices $A,B \in \mathbb{R}^{m\times n}$. A strategy pair $(\mathbf{x}^*, \mathbf{y}^*)$ is an equilibrium pair (NE), if and only if there exists scalers $u,v \geq 0$ and $\alpha, \beta \in \mathbb{R}$ such that the conditions in Table \ref{Table_KKT_Conditions_for_Bimatrix_Game} are satisfied.
\end{theorem}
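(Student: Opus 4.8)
The plan is to reduce the equilibrium condition to the Karush--Kuhn--Tucker (KKT) conditions of the two linear programs \eqref{Eq_Player_1_Constrained_Problem} and \eqref{Eq_Player_2_Constrained_Problem}, and to exploit the fact that, being linear programs (convex objective, affine constraints), their KKT systems are simultaneously necessary and sufficient for global optimality. By definition, $(\mathbf{x}^*,\mathbf{y}^*)$ is an equilibrium pair exactly when $\mathbf{x}^*$ solves \eqref{Eq_Player_1_Constrained_Problem} with $\mathbf{y}=\mathbf{y}^*$ held fixed and, simultaneously, $\mathbf{y}^*$ solves \eqref{Eq_Player_2_Constrained_Problem} with $\mathbf{x}=\mathbf{x}^*$ held fixed. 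So the theorem will follow by characterizing optimality of each of the two programs separately and then conjoining the two characterizations; the conceptual content is entirely in observing that a best-response pair decouples into two independent linear programs once the opponent's strategy is frozen.

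First I would treat the transmitter's program \eqref{Eq_Player_1_Constrained_Problem}. Its objective $\mathbf{x}^TA\mathbf{y}^*=(A\mathbf{y}^*)^T\mathbf{x}$ is linear in $\mathbf{x}$ and all three constraints are affine, so the linearity constraint qualification holds at every feasible point and a feasible $\mathbf{x}^*$ is a maximizer if and only if there are multipliers $\alpha\in\mathbb{R}$ for $\mathbf{1}^T\mathbf{x}-1=0$, $u\geq 0$ for $\mathbf{r}^T\mathbf{x}-r_{\mathrm{ave}}\leq 0$, and $\boldsymbol{\mu}\geq\mathbf{0}$ for $-\mathbf{x}\leq\mathbf{0}$ satisfying stationarity $A\mathbf{y}^*-\alpha\mathbf{1}-u\mathbf{r}+\boldsymbol{\mu}=\mathbf{0}$ together with complementary slackness $u(\mathbf{r}^T\mathbf{x}^*-r_{\mathrm{ave}})=0$ and $\boldsymbol{\mu}^T\mathbf{x}^*=0$. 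I would then eliminate the slack multiplier: from stationarity $\boldsymbol{\mu}=\alpha\mathbf{1}+u\mathbf{r}-A\mathbf{y}^*$, so $\boldsymbol{\mu}\geq\mathbf{0}$ is precisely (I.4); and $\boldsymbol{\mu}^T\mathbf{x}^*=0$ expands to $\alpha\,\mathbf{1}^T\mathbf{x}^*+u\,\mathbf{r}^T\mathbf{x}^*-{\mathbf{x}^*}^TA\mathbf{y}^*=0$, which by (I.1) and (I.6) collapses to (I.5). The entries (I.1)--(I.3) are primal feasibility, (I.6) is the remaining complementary-slackness relation, and (I.7) records the sign restrictions on $u$ and $\alpha$; since $\mathbf{x}^*\geq\mathbf{0}$ and $\boldsymbol{\mu}\geq\mathbf{0}$ the identity $\boldsymbol{\mu}^T\mathbf{x}^*=0$ is equivalent to componentwise complementarity, so the reduction is reversible and the Player I block of Table~\ref{Table_KKT_Conditions_for_Bimatrix_Game} is equivalent to the full KKT system for \eqref{Eq_Player_1_Constrained_Problem}. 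An identical argument applied to \eqref{Eq_Player_2_Constrained_Problem}, after writing ${\mathbf{x}^*}^TB\mathbf{y}=(B^T\mathbf{x}^*)^T\mathbf{y}$ to expose its linearity in $\mathbf{y}$, gives the equivalence of optimality of $\mathbf{y}^*$ with (II.1)--(II.7) for some $v\geq 0$, $\beta\in\mathbb{R}$.

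Conjoining the two characterizations completes the proof: $(\mathbf{x}^*,\mathbf{y}^*)$ is an equilibrium pair iff $\mathbf{x}^*$ is optimal for \eqref{Eq_Player_1_Constrained_Problem} and $\mathbf{y}^*$ is optimal for \eqref{Eq_Player_2_Constrained_Problem}, iff (I.1)--(I.7) hold for some $u,\alpha$ and (II.1)--(II.7) hold for some $v,\beta$, which is exactly Table~\ref{Table_KKT_Conditions_for_Bimatrix_Game}. I expect the only delicate points to be the two standard facts that make the KKT reduction legitimate: necessity of KKT, which normally needs a constraint qualification but here is automatic because every constraint is affine; and sufficiency of KKT, which needs convexity and holds because a linear objective is concave over the polytope defined by the constraints, so any KKT point is a global maximizer. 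One should also note the degenerate case in which $\widehat{\mathbf{X}}$ or $\widehat{\mathbf{Y}}$ is empty (for instance $r_{\mathrm{ave}}<r_1$): then neither side of the equivalence can hold and the statement is vacuously true, while nonemptiness of the feasible regions and actual existence of an equilibrium are the separate issue addressed in the Appendix.
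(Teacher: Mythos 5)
Your proposal is correct and follows essentially the same route as the paper: freeze the opponent's strategy, write each best-response problem as the linear program \eqref{Eq_Player_1_Constrained_Problem} or \eqref{Eq_Player_2_Constrained_Problem}, show that its KKT system (with the nonnegativity multiplier eliminated via stationarity and complementary slackness) is exactly the corresponding block of Table~\ref{Table_KKT_Conditions_for_Bimatrix_Game}, and invoke linearity/affinity to get that the KKT conditions are both necessary and sufficient. Your additional remarks on the constraint qualification, the reversibility of the multiplier elimination, and the vacuous case of empty feasible sets only make explicit what the paper leaves implicit.
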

\begin{proof}
Consider the KKT conditions for the linear program \eqref{Eq_Player_1_Constrained_Problem}. 
The optimal solution $\mathbf{x}^*$ must satisfy the \emph{primal feasibility} conditions given by 
\begin{equation}
 \left\{
  \begin{array}{l}
  \mathbf{1}^T \mathbf{x}^* - 1 = 0 			\\
  \mathbf{r}^T \mathbf{x}^* - r_{\mathrm{ave}} \leq 0 	\\
  - \mathbf{x}^* \leq 0 					\\
 \end{array}
 \right.
\end{equation}
which are identical to conditions (I.1) -- (I.3) in Table~\ref{Table_KKT_Conditions_for_Bimatrix_Game}. 
From the \emph{dual feasibility} conditions we must have
\begin{equation}
\label{Eq_Dual_Feasibility}
 \begin{aligned}
  \nabla \Big( \mathbf{x^*}^T A\ \mathbf{y}^* \Big) &- 
   \sum_{i=1}^{m}\lambda_i \nabla({-x_i^*})  \\
    & - u\nabla \big( \mathbf{r}^T \mathbf{x}^* - r_{\mathrm{ave}} \big) \\ 
    & - \mu \nabla \big( \mathbf{1}^T \mathbf{x}^* - 1 \big) = 0\\
 \end{aligned}
\end{equation}
such that
\begin{equation}
    \lambda_i \geq 0, 	\quad
      u \geq 0,		\quad
      \mu \in \mathbb{R}
\end{equation}
where $\lambda_i$, $u$ and $\mu$ are the KKT multipliers corresponding to constraints in \eqref{Eq_Player_1_Constrained_Problem}.
If we simplify \eqref{Eq_Dual_Feasibility} and use vector representations for KKT multipliers we get
\begin{equation}
\label{Eq_Dual_Feasibility_Simplified}
 A\mathbf{y}^* + \boldsymbol{\lambda}_{m\times1}- u \mathbf{r} - \mu \mathbf{1}_{m\times1}=0 
\end{equation}
or, equivalently,
\begin{equation}
 A\mathbf{y}^* - u \mathbf{r} - \mu \mathbf{1}_{m\times1} \leq \mathbf{0} \qquad u\geq 0 \ \text{and}\ \mu \in \mathbb{R}
\end{equation}
which gives us  condition (I.4) in Table \ref{Table_KKT_Conditions_for_Bimatrix_Game} (where we have made a change of variable, $\mu \rightarrow \alpha$, and have used the fact that $\boldsymbol{\lambda}_{m\times1} \geq \mathbf{0}$). Finally, from the \emph{complementary slackness} conditions we must  have
\begin{equation}
\label{Eq_Complementary_Slackeness_for_Player_1}
 \left\{
 \begin{array}{l}
  \boldsymbol{\lambda}^T \mathbf{x}^* = 0 \\
   u\left( \mathbf{r}^T \mathbf{x}^* - r_{\mathrm{ave}} \right) = 0 \\
 \end{array}
  \right.
\end{equation}
the second condition in \eqref{Eq_Complementary_Slackeness_for_Player_1} is identical to (I-6). By multiplying \eqref{Eq_Dual_Feasibility_Simplified} by ${\mathbf{x}^*}^T$ and using $\boldsymbol{\lambda}^T \mathbf{x}^* = 0$ we have
\begin{equation}
 {\mathbf{x}^*}^T A\ \mathbf{y}^* + {\mathbf{x}^*}^T \boldsymbol{\lambda}- u {\mathbf{x}^*}^T \mathbf{r} - \mu {\mathbf{x}^*}^T \mathbf{1}=0
\end{equation}
which can be further simplified to
\begin{equation}
  {\mathbf{x}^*}^T A\ \mathbf{y}^* - u r_{\mathrm{ave}} - \mu =0 \qquad u\geq 0 \ \text{and}\ \mu \in \mathbb{R}
\end{equation}
which results in condition (I.5) in Table \ref{Table_KKT_Conditions_for_Bimatrix_Game}. In the exact same way, we can derive KKT's necessary conditions of optimality for the jammer to get the conditions (II.1) -- (II.7) in Table~\ref{Table_KKT_Conditions_for_Bimatrix_Game}. To prove that these conditions are also sufficient, we can use the fact that the objective functions in \eqref{Eq_Player_1_Constrained_Problem} and \eqref{Eq_Player_2_Constrained_Problem} are linear (affine), and as a result, the KKT conditions are necessary and sufficient for optimality; this concludes the proof.
\end{proof}
Table \ref{Table_KKT_Conditions_for_Bimatrix_Game} summarizes the necessary and sufficient conditions of optimality for the constrained bimatrix game. Furthermore, the following lemma, gives  the expected payoff of the players at the Nash equilibrium.
\begin{lemma}
\label{Lemma_Bimatrix_Game_Expected_Payoffs}
 Consider the constrained bimatrix game $\mathcal{G} = \big( A,B, $ $\mathbf{r}, \mathbf{j}, r_{\mathrm{ave}}, j_{\mathrm{ave}} \big)$. The expected payoffs of the game for the equilibrium pair $(\mathbf{x}^*, \mathbf{y}^*)$ are
 \begin{align}
  &	A\big( \mathbf{x}^*, \mathbf{y}^* \big) = u r_{\mathrm{ave}} + \alpha \\
  &	B\big( \mathbf{x}^*, \mathbf{y}^* \big) = v j_{\mathrm{ave}} + \beta
 \end{align}
 for the transmitter and jammer, respectively.
\end{lemma}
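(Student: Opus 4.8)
The plan is to read the result straight off the optimality characterization already established in Theorem~\ref{Theorem_2}. By definition the equilibrium payoffs are $A(\mathbf{x}^*,\mathbf{y}^*) = {\mathbf{x}^*}^T A\,\mathbf{y}^*$ and $B(\mathbf{x}^*,\mathbf{y}^*) = {\mathbf{x}^*}^T B\,\mathbf{y}^*$. If $(\mathbf{x}^*,\mathbf{y}^*)$ is a Nash equilibrium, Theorem~\ref{Theorem_2} guarantees the existence of scalars $u,v\ge 0$ and $\alpha,\beta\in\mathbb{R}$ for which every row of Table~\ref{Table_KKT_Conditions_for_Bimatrix_Game} holds; in particular conditions (I.5) and (II.5) read ${\mathbf{x}^*}^T A\,\mathbf{y}^* - u r_{\mathrm{ave}} - \alpha = 0$ and ${\mathbf{x}^*}^T B\,\mathbf{y}^* - v j_{\mathrm{ave}} - \beta = 0$. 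Rearranging these two identities gives exactly the claimed expressions, so the lemma is a one-line consequence of Theorem~\ref{Theorem_2}.

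If one prefers a derivation that does not quote (I.5)/(II.5) verbatim, I would instead reconstruct them from the remaining rows: left-multiply the dual-feasibility inequality (I.4), $A\mathbf{y}^* - u\mathbf{r} - \alpha\mathbf{1}\le\mathbf{0}$, by the nonnegative row vector ${\mathbf{x}^*}^T$, then use the simplex constraint (I.1), $\mathbf{1}^T\mathbf{x}^* = 1$, together with the complementary-slackness relation (I.6), $u(\mathbf{r}^T\mathbf{x}^* - r_{\mathrm{ave}}) = 0$, to obtain ${\mathbf{x}^*}^T A\,\mathbf{y}^* \le u r_{\mathrm{ave}} + \alpha$. Equality then follows from the complementary-slackness condition $\boldsymbol{\lambda}^T\mathbf{x}^* = 0$ on the nonnegativity multipliers, which is precisely the step carried out inside the proof of Theorem~\ref{Theorem_2} to produce (I.5). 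The symmetric argument for the jammer, using (II.1), (II.4) and (II.6), yields the second identity.

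There is essentially no obstacle here; the only point worth flagging is bookkeeping. The scalars $u,\alpha$ (resp.\ $v,\beta$) appearing in the lemma must be understood as the same multipliers whose existence is asserted in Theorem~\ref{Theorem_2}, so the statement is really about an equilibrium \emph{together with} its certifying multipliers rather than about $(\mathbf{x}^*,\mathbf{y}^*)$ in isolation. Once that is noted, the result is immediate.
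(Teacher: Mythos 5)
Your proof is correct and follows exactly the paper's route: the authors likewise obtain the lemma directly from conditions (I.5) and (II.5) of Table~\ref{Table_KKT_Conditions_for_Bimatrix_Game}, whose existence is guaranteed by Theorem~\ref{Theorem_2}. Your remark that $u,\alpha,v,\beta$ are the certifying multipliers of the equilibrium is a fair clarification but does not change the argument.
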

\begin{proof}
 Follows from conditions (I.5) and (II.5) in Table~\ref{Table_KKT_Conditions_for_Bimatrix_Game}.
\end{proof}
%
%
%
\section{Connection to Quadratic Programing}
\label{Sec_3_Quadratic_Programming}

While Theorem~\ref{Theorem_2} gives the necessary and sufficient conditions for the strategy profile $(\mathbf{x}^*, \mathbf{y}^*)$ to be a NE of $\mathcal{G}$, it does not provide a constructive way to find the NE solution(s) and the equilibrium pairs of the constrained bimatrix game.
We have previously shown \cite{Koorosh_2013} that for every constrained two-player zero-sum game there exists an equivalent linear program whose solution yields a NE for the game and every NE of the game is a solution of the corresponding linear program.

In this section, we show that there exist a similar connection between the NE solutions and equilibrium pairs of the constrained bimatrix games and global maximum(s) of a \emph{quadratic program}%
\footnote{The connection between standard bimatrix games and quadratic programs was first shown in \cite{Manag_1964}.}. 
In the following theorem, we show that the global maximum of the quadratic program in~\eqref{Eq_Quadratic_Program_Equivalent} subject to the constraints in~\eqref{Eq_Quadratic_Program_Constraints} satisfies all conditions of Theorem~\ref{Theorem_2} and therefore, the corresponding maximizer is a NE solution of~$\mathcal{G}$. 

%
%
\begin{theorem}
\label{Theorem_Quadratic_Program}
 Let $\mathcal{G} = \big(A, B, \mathbf{x}, \mathbf{y}, r_{\mathrm{ave}}, j_{\mathrm{ave}}\big)$ be a constrained bimatrix game with $A, B \in \mathbb{R}^{m\times n}$. The strategy pair $\big(\mathbf{x}^*, \mathbf{y}^* \big)$ is a Nash equilibrium of $\mathcal{G}$ if and only if there exist scalers $u^*,v^*\geq0$ and $\alpha^*,\beta^* \in \mathbb{R}$ such that 
 $\big(\mathbf{x}^*, \mathbf{y}^*, u^*,v^*,$ $ \alpha^*,\beta^* \big)$ is a global maximizer of the following quadratic program
\begin{equation}
 \label{Eq_Quadratic_Program_Equivalent}
 \underset{\mathbf{x}, \mathbf{y}, u,v, \alpha, \beta }{\mathrm{maximize}} \ \mathbf{x}^T \big(A + B \big) \ \mathbf{y} - 
 ur_{\mathrm{ave}} - vj_{\mathrm{ave}} - \alpha - \beta
\end{equation}
{subject to:}
\begin{equation}
 \label{Eq_Quadratic_Program_Constraints}
 \left\{
 \begin{array}{lr}
  A\mathbf{y} - u \mathbf{r}_{m\times1} - \alpha \mathbf{1}_{m\times1} \leq \mathbf{0} 			& (\ref{Eq_Quadratic_Program_Constraints}.1)\\ 
  \mathbf{x}^T B - v \mathbf{j}_{n\times1} - \beta \mathbf{1}_{n\times1} \leq \mathbf{0} 		& (\ref{Eq_Quadratic_Program_Constraints}.2)\\ 
  \mathbf{r}^T \mathbf{x} - r_{\mathrm{ave}} \leq 0							& (\ref{Eq_Quadratic_Program_Constraints}.3)\\
  \mathbf{j}^T \mathbf{y} - j_{\mathrm{ave}} \leq 0							& (\ref{Eq_Quadratic_Program_Constraints}.4)\\ 
  \mathbf{1}^T \mathbf{x} - 1 = 0									& (\ref{Eq_Quadratic_Program_Constraints}.5)\\
  \mathbf{1}^T \mathbf{y} - 1 = 0									& (\ref{Eq_Quadratic_Program_Constraints}.6)\\
  -\mathbf{x}, -\mathbf{y}\leq \mathbf{0} ,\  -u, -v \leq 0 \ \mathrm{and}\ \alpha, \beta \in \mathbb{R}& (\ref{Eq_Quadratic_Program_Constraints}.7)\\
 \end{array}\right.
\end{equation}
\end{theorem}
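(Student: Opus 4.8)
The plan is to mirror and extend the classical Mangasarian--Stone correspondence between bimatrix games and quadratic programs~\cite{Manag_1964}, carrying along the two extra Lagrange multipliers $u,v$ that the average-power constraints contribute. The argument splits into three parts: (i) the objective of \eqref{Eq_Quadratic_Program_Equivalent} is $\le 0$ everywhere on the feasible set \eqref{Eq_Quadratic_Program_Constraints}; (ii) every Nash equilibrium of $\mathcal{G}$ furnishes a feasible tuple at which the objective equals $0$, so the optimal value is $0$ and each NE is a global maximizer; and (iii) conversely, once the optimal value is known to be $0$, every global maximizer satisfies all the conditions of Table~\ref{Table_KKT_Conditions_for_Bimatrix_Game} and is therefore a NE by Theorem~\ref{Theorem_2}.

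For (i), I would take an arbitrary feasible $(\mathbf{x},\mathbf{y},u,v,\alpha,\beta)$. Left-multiplying $(\ref{Eq_Quadratic_Program_Constraints}.1)$ by $\mathbf{x}^T\ge\mathbf{0}^T$ and using $\mathbf{1}^T\mathbf{x}=1$ gives $\mathbf{x}^T A\mathbf{y}\le u(\mathbf{r}^T\mathbf{x})+\alpha$; then $\mathbf{r}^T\mathbf{x}\le r_{\mathrm{ave}}$ together with $u\ge0$ gives $\mathbf{x}^T A\mathbf{y}\le ur_{\mathrm{ave}}+\alpha$. Right-multiplying $(\ref{Eq_Quadratic_Program_Constraints}.2)$ by $\mathbf{y}\ge\mathbf{0}$ and arguing symmetrically gives $\mathbf{x}^T B\mathbf{y}\le vj_{\mathrm{ave}}+\beta$. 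Adding the two bounds yields $\mathbf{x}^T(A+B)\mathbf{y}-ur_{\mathrm{ave}}-vj_{\mathrm{ave}}-\alpha-\beta\le0$. For (ii), given a NE $(\mathbf{x}^*,\mathbf{y}^*)$, Theorem~\ref{Theorem_2} supplies $u^*,v^*\ge0$ and $\alpha^*,\beta^*\in\mathbb{R}$ satisfying Table~\ref{Table_KKT_Conditions_for_Bimatrix_Game}; conditions (I.1)--(I.4) and (II.1)--(II.4) are exactly the constraints $(\ref{Eq_Quadratic_Program_Constraints}.1)$--$(\ref{Eq_Quadratic_Program_Constraints}.7)$, so the tuple is feasible, and adding (I.5) and (II.5) gives $\mathbf{x}^{*T}(A+B)\mathbf{y}^*=u^*r_{\mathrm{ave}}+v^*j_{\mathrm{ave}}+\alpha^*+\beta^*$, i.e.\ objective value $0$. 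With part (i) this exhibits a global maximizer and shows the optimal value equals $0$.

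For (iii), let $(\mathbf{x}^*,\mathbf{y}^*,u^*,v^*,\alpha^*,\beta^*)$ be any global maximizer. Since the optimal value is $0$, we have $\mathbf{x}^{*T}(A+B)\mathbf{y}^*=u^*r_{\mathrm{ave}}+v^*j_{\mathrm{ave}}+\alpha^*+\beta^*$. From part (i) applied to this tuple, $\mathbf{x}^{*T}A\mathbf{y}^*\le u^*(\mathbf{r}^T\mathbf{x}^*)+\alpha^*\le u^*r_{\mathrm{ave}}+\alpha^*$ and $\mathbf{x}^{*T}B\mathbf{y}^*\le v^*(\mathbf{j}^T\mathbf{y}^*)+\beta^*\le v^*j_{\mathrm{ave}}+\beta^*$; because the sum of the left-hand ends equals the sum of the right-hand ends, both chains must be equalities throughout. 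The inner equalities are exactly conditions (I.6) and (II.6), the outer equalities then give (I.5) and (II.5), and feasibility of the tuple supplies (I.1)--(I.4), (I.7) and (II.1)--(II.4), (II.7). Hence all conditions of Table~\ref{Table_KKT_Conditions_for_Bimatrix_Game} hold, and Theorem~\ref{Theorem_2} completes the argument.

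The main obstacle is that part (iii) truly requires the optimal value to be $0$, and this rests on the \emph{existence} of a Nash equilibrium for the constrained game $\mathcal{G}$ --- which, unlike the standard (unconstrained) case, is not automatic and must be borrowed from the existence result established in the Appendix. A related technical point is confirming that the supremum in \eqref{Eq_Quadratic_Program_Equivalent} is attained, not merely approached; this again follows once a NE is known to exist, since the corresponding tuple is feasible and meets the bound from part (i). Everything else --- in particular checking that the quadratic-program constraints coincide line by line with the primal and dual feasibility conditions of Table~\ref{Table_KKT_Conditions_for_Bimatrix_Game} --- is routine bookkeeping.
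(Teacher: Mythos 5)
Your proof is correct, but the key direction (global maximizer $\Rightarrow$ NE) is argued along a genuinely different route than the paper's. The paper never invokes the ``optimal value equals zero'' fact for this direction; instead it writes out the KKT stationarity and complementary-slackness conditions of the quadratic program \eqref{Eq_Quadratic_Program_Equivalent} itself, identifies the multipliers of constraints $(\ref{Eq_Quadratic_Program_Constraints}.1)$--$(\ref{Eq_Quadratic_Program_Constraints}.2)$ with the strategy vectors ($\lambda_i = x_i$, $\mu_k = y_k$) and the multipliers of the simplex constraints with $\alpha,\beta$, and then shows $b_1=u$, $b_2=v$ to recover (I.5), (I.6), (II.5), (II.6) of Table~\ref{Table_KKT_Conditions_for_Bimatrix_Game}. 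You instead follow the classical Mangasarian--Stone value argument \cite{Manag_1964}: part (i) bounds the objective by $0$, part (ii) shows a NE attains $0$, and part (iii) forces every inequality in the chains $\mathbf{x}^{*T}A\mathbf{y}^*\le u^*\mathbf{r}^T\mathbf{x}^*+\alpha^*\le u^*r_{\mathrm{ave}}+\alpha^*$ (and the $B$-side analogue) to be tight, which yields (I.5)--(I.6) and (II.5)--(II.6) directly. What your route buys is rigor and economy: it bypasses the paper's delicate and only loosely justified step of \emph{choosing} the QP multipliers to equal $(\mathbf{x},\mathbf{y},\alpha,\beta)$ (KKT only guarantees \emph{some} multipliers, not those particular ones), and it makes the attainment of the maximum explicit. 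What it costs is an explicit dependence on the existence of a NE for $\mathcal{G}$ (the Appendix result, valid whenever $\widehat{\mathbf{X}},\widehat{\mathbf{Y}}$ are nonempty) to pin the optimal value at $0$; you correctly flag this, and since the paper proves existence under exactly the standing assumptions that make the game well defined, this is a legitimate borrowing rather than a gap --- the same existence fact is what Mangasarian and Stone rely on in the unconstrained case.
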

%
%
%
\begin{proof}
	First, notice that the constraints in \eqref{Eq_Quadratic_Program_Constraints} satisfy all the conditions of Table~\ref{Table_KKT_Conditions_for_Bimatrix_Game} except for (I.5), (I.6) and (II.5), (II.6). As a result, if we show that the global maximum of the quadratic program in \eqref{Eq_Quadratic_Program_Equivalent} satisfies these additional conditions, then, by Theorem~\ref{Theorem_2}, it must be a NE solution of $\mathcal{G}$.
	If we premultiply (\ref{Eq_Quadratic_Program_Constraints}.1) by $\mathbf{x}^T$ and use (\ref{Eq_Quadratic_Program_Constraints}.3) to simplify the result we have
%
%
%
	\begin{equation}
	  \label{Eq_Z01}
	  \mathbf{x}^T A\ \mathbf{y} - ur_{\mathrm{ave}} - \alpha  \leq 0
	\end{equation}
	since $\mathbf{x}^T$  is a probability vector and $u\geq 0$.
	Similarly, we can obtain the following inequality from (\ref{Eq_Quadratic_Program_Constraints}.2) and (\ref{Eq_Quadratic_Program_Constraints}.4):
	\begin{equation}
	  \label{Eq_Z02} 
	  \mathbf{x}^T B\ \mathbf{y} - vj_{\mathrm{ave}} - \beta  \leq 0
	\end{equation}
	by combining  inequalities \eqref{Eq_Z01} and \eqref{Eq_Z02} we observe that
	\begin{equation*}
	  f\big( \mathbf{x}, \mathbf{y}, u,v, \alpha,\beta \big) \triangleq \mathbf{x}^T \big(A + B \big) \ \mathbf{y} - 
	  ur_{\mathrm{ave}} - vj_{\mathrm{ave}} - \alpha - \beta \leq 0
	\end{equation*}
	Thus, any set of variables $\big(\mathbf{x}^*, \mathbf{y}^*, u^*,v^*,$ $ \alpha^*,\beta^* \big)$ that satisfies
	\begin{equation}
	\label{Eq_f_Function_Definition}
	  f\big(\mathbf{x}^*, \mathbf{y}^*, u^*,v^*, \alpha^*,\beta^* \big) =0
	\end{equation}
	is a {\emph{global}} maximum of \eqref{Eq_Quadratic_Program_Equivalent}. Next, we will consider the KKT necessary conditions for optimality for the optimization problem in \eqref{Eq_Quadratic_Program_Equivalent}. To find the necessary KKT conditions, we stack the variables in the following vector and we take the gradients in the same order.
	\begin{equation}
	\mathbf{z}^T \triangleq
	\begin{bmatrix}
	  \mathbf{x}^T_{1\times m}&
	  \mathbf{y}^T_{1\times n}&
	  u&
	  v&
	  \alpha&
	  \beta&
	\end{bmatrix}^T_{(m+n+4)\times1}
	\end{equation}
	From this point forward, we assume all variables are  optimal and for convenience, we drop the $^*$ from the variables. Primal feasibility conditions are identical to the constraints in \eqref{Eq_Quadratic_Program_Constraints}. The dual feasibility condition necessitates%
	\footnote{The KKT conditions are the necessary conditions (not sufficient)  since the objective function in \eqref{Eq_Quadratic_Program_Equivalent} is non-convex.}
	that, for the global maximizer of \eqref{Eq_f_Function_Definition}, $\mathbf{z}$, the gradient of $f(\mathbf{z})$ must be a linear combination of the gradients of the binding constraints in \eqref{Eq_Quadratic_Program_Constraints}, i.e., we must have
	\begin{multline}
	\label{Eq_KKT_Conditions_for_Quadratic_Program}
	\begin{bmatrix}
	  (A+B)\mathbf{y}\\
	  (A+B)^T \mathbf{x}\\
	  -r_{\mathrm{ave}}\\
	  -j_{\mathrm{ave}}\\
	  -1\\
	  -1\\
	\end{bmatrix}
	- \sum_{i=1}^{m}\lambda_i
	\begin{bmatrix}
	  \mathbf{0}\\
	  A_{i,:}^T \\
	  -r_i \\
	  0\\
	  -1\\
	  0
	\end{bmatrix}
	- \sum_{k=1}^{n} \mu_k
	  \begin{bmatrix}
	  B_{:,k} \\
	  \mathbf{0}\\
	  0 \\
	  -j_k\\
	  0\\
	  -1
	\end{bmatrix}
	- b_1 
	\begin{bmatrix}
	  \mathbf{r} \\
	  \mathbf{0} \\
	  0 \\
	  0 \\
	  0 \\
	  0 \\
	\end{bmatrix}\\
	%
	-b_2 
	\begin{bmatrix}
	  \mathbf{0} \\
	  \mathbf{j} \\
	  0 \\
	  0 \\
	  0 \\
	  0 \\
	\end{bmatrix}
	- a_1
	\begin{bmatrix}
	  \mathbf{1} \\
	  \mathbf{0} \\
	  0 \\
	  0 \\
	  0 \\
	  0 \\
	\end{bmatrix}
	-a_2
	  \begin{bmatrix}
	  \mathbf{0} \\
	  \mathbf{1} \\
	  0 \\
	  0 \\
	  0 \\
	  0 \\
	\end{bmatrix}
	  -\sum_{i=1}^{m} \phi_i
	  \begin{bmatrix}
	  \mathbf{e}_i \\
	  \mathbf{0} \\
	  0 \\
	  0 \\
	  0 \\
	  0 \\
	\end{bmatrix}
	  -\sum_{k=1}^{n} \theta_k
	  \begin{bmatrix}
	  \mathbf{0} \\
	  \mathbf{e}_k \\
	  0 \\
	  0 \\
	  0 \\
	  0 \\
	\end{bmatrix}\\
	- \sigma_1
	\begin{bmatrix}
	  \mathbf{0} \\
	  \mathbf{0} \\
	  -1 \\
	  0 \\
	  0 \\
	  0 \\
	\end{bmatrix}
	  -\sigma_2 
	  \begin{bmatrix}
	  \mathbf{0} \\
	  \mathbf{0} \\
	  0 \\
	  -1 \\
	  0 \\
	  0 \\
	\end{bmatrix} = 
		  \begin{bmatrix}
	  \mathbf{0}_{m\times 1} \\
	  \mathbf{0}_{n\times 1} \\
	  0_{1\times 1} \\
	  0_{1\times 1} \\
	  0_{1\times 1} \\
	  0_{1\times 1} \\
	\end{bmatrix} \quad
	\begin{array}{c}
		\text{(i)}\\
		\text{(ii)}\\
		\text{(iii)}\\
		\text{(iv)}\\
		\text{(v)}\\
		\text{(vi)}\\
	\end{array}
\end{multline}
where we have taken the gradients of the constraints in the same order as in \eqref{Eq_Quadratic_Program_Constraints}, and $A_{i,:}$ and $\mathbf{e}_i$ denote the $i$'th row of $A$ and the $i$'th basis vector, respectively.  
Additionally, the KKT multipliers must satisfy
\begin{equation}
\left\{
\begin{aligned}
 & \lambda_i, \ \phi_i \geq 0  	&\quad \text{for}\ i=1,\dots,m\\
 & \mu_k,\ \theta_k \geq 0	&\quad \text{for}\ k=1,\dots,n\\
 & b_1,\ b_2,\ \sigma_1,\ \sigma_2 \geq 0 &\\
 & a_1,\ a_2 \in \mathbb{R} &\\
\end{aligned}
\right.
\end{equation}

{\color{black} By inspecting parts (v) and (vi) of the systems of vector equations in \eqref{Eq_KKT_Conditions_for_Quadratic_Program},  we observe that for the KKT multipliers $\lambda_i$ and $\mu_k$ we have}
\begin{equation}
\sum_{i=1}^{m} \lambda_i = 1, \quad \lambda_i \geq 0 
\quad \text{and} \quad
\sum_{k=1}^{n} \mu_k = 1, \quad \mu_k \geq 0 
\end{equation}
Now, let
\begin{equation}
\label{Eq_KKT_Multipliers_Substitutions}
 \begin{aligned}
  & \lambda_i = x_i	\quad \text{for}\ i = 1, \dots, m \\
  & \mu_k = y_k		\quad \text{for}\ k = 1, \dots, n \\
  & a_1 = \alpha,\  a_2 = \beta \\
 \end{aligned}
\end{equation}
and note that, because of the constraints on the KKT multipliers $\lambda_i$, $\mu_k$, $a_1$ and $a_2$, we are allowed to make these assumptions.
From parts (iii) and (iv) of \eqref{Eq_Quadratic_Program_Constraints} we obtain
\begin{equation}
\label{Eq_Quadratic_Program_Dual_Part_III_and_IV}
 \begin{aligned}
  &	\mathbf{j}^T \mathbf{y} - j_{\mathrm{ave}} + \sigma_2 = 0 \qquad \sigma_2 \geq 0 \\
  &	\mathbf{x}^T\mathbf{r} - r_{\mathrm{ave}} + \sigma_1 = 0 \qquad  \sigma_1 \geq 0 \\
  \end{aligned}
\end{equation}
and finally, from parts (i) and (ii) we have
\begin{equation}
\label{Eq_Quadratic_Program_Dual_Part_I_and_II}
 \begin{aligned}
  &	\mathbf{x}^T B - b_2 \mathbf{j}^T - \beta \mathbf{1}^T + \boldsymbol{\theta}^T = 0 
   	                   & \qquad b_2, \boldsymbol{\theta} \geq 0	\\
  &	A \mathbf{y} - b_1 \mathbf{r} - \alpha \mathbf{1} + \boldsymbol{\phi} = 0
			  & \qquad b_1, \boldsymbol{\phi} \geq 0		\\    
 \end{aligned}
\end{equation}
By substituting the KKT multipliers with the variables given in \eqref{Eq_KKT_Multipliers_Substitutions} we may write the complementary slackness conditions for \eqref{Eq_Quadratic_Program_Constraints} as follows
\begin{equation}\label{Eq_Quadratic_Program_Complementary_1}
      \mathbf{x}^T  A\ \mathbf{y} - u \mathbf{x}^T \mathbf{r} - \alpha = 0
\end{equation}
\begin{equation}
\label{Eq_Quadratic_Program_Complementary_2}
      \mathbf{x}^T  B\ \mathbf{y} - v \mathbf{j}^T \mathbf{y} - \beta = 0
\end{equation}
\begin{equation}
\label{Eq_Quadratic_Program_Complementary_3}
  b_1 \mathbf{r}^T \mathbf{x} = b_1 r_{\mathrm{ave}}
\end{equation}
\begin{equation}
\label{Eq_Quadratic_Program_Complementary_4}
  b_2 \mathbf{j}^T \mathbf{y} = b_2 j_{\mathrm{ave}}
\end{equation}
and
\begin{equation}
\label{Eq_Quadratic_Program_Complementary_7_to_10}
   \boldsymbol{\phi}^T\mathbf{x} = 
   \boldsymbol{\theta}^T\mathbf{y} = 
   \sigma_1 u = 
   \sigma_2 v = 0 \\
\end{equation}

Now, if we multiply the first relation in \eqref{Eq_Quadratic_Program_Dual_Part_III_and_IV} by $v$ and use \eqref{Eq_Quadratic_Program_Complementary_7_to_10} to simplify the result we have
\begin{equation}
\begin{aligned}
  & v \big( \mathbf{j}^T \mathbf{y} - j_{\mathrm{ave}} \big) + v\sigma_2 = 0 \\
  & \qquad \Rightarrow 
			v \big( \mathbf{j}^T \mathbf{y} - j_{\mathrm{ave}} \big) = 0
\end{aligned}
\end{equation}
which is identical to condition (II.6) in Table~\ref{Table_KKT_Conditions_for_Bimatrix_Game}. Similarly, from the second relation in \eqref{Eq_Quadratic_Program_Dual_Part_III_and_IV}, we can obtain 
\begin{equation}
 u \big( \mathbf{x}^T\mathbf{r} - r_{\mathrm{ave}} \big)= 0
\end{equation}
which gives us condition (I.6) in Table~\ref{Table_KKT_Conditions_for_Bimatrix_Game}. Finally, if we post-multiply \eqref{Eq_Quadratic_Program_Dual_Part_I_and_II} by $\mathbf{y}$ and use \eqref{Eq_Quadratic_Program_Complementary_7_to_10} to simplify the result we have
\begin{align}
 & \mathbf{x}^T B\ \mathbf{y} - b_2 \mathbf{j}^T \mathbf{y} - \beta \mathbf{1}^T \mathbf{y} + \boldsymbol{\theta}^T\mathbf{y} = 0  \notag \\
 & \qquad \Rightarrow \qquad \mathbf{x}^T B\ \mathbf{y}- b_2 \mathbf{j}^T \mathbf{y} - \beta   = 0  \label{Eq_Z05}
\end{align}
comparing \eqref{Eq_Z05} with \eqref{Eq_Quadratic_Program_Complementary_2} we notice that $b_2 = v$ and by using \eqref{Eq_Quadratic_Program_Complementary_3} we obtain the desired result:
\begin{equation}
\label{Eq_Z03}
 \mathbf{x}^T B\ \mathbf{y}- v j_{\mathrm{ave}} - \beta   = 0
\end{equation}
Similarly, we can show that $b_1 = u$ and 
\begin{equation}
\label{Eq_Z04}
 \mathbf{x}^T A \ \mathbf{y} - u r_{\mathrm{ave}} - \alpha = 0
\end{equation}
Conditions \eqref{Eq_Z03} and \eqref{Eq_Z04} are exactly conditions (II.5) and (I.5) in Table~\ref{Table_KKT_Conditions_for_Bimatrix_Game} and as a result, the maximizer of the quadratic program in \eqref{Eq_Quadratic_Program_Equivalent}, subject to constraints in \eqref{Eq_Quadratic_Program_Constraints}, satisfies all the conditions of Theorem~\ref{Theorem_2} and, hence, is a Nash equilibrium of $\mathcal{G}$. The last step is to show that the set of variables $\big( \mathbf{x}, \mathbf{y}, u,v, \alpha,\beta \big)$ is indeed a global maximizer of \eqref{Eq_Quadratic_Program_Equivalent}. Adding \eqref{Eq_Z03} to \eqref{Eq_Z04} gives us the desired result.
\begin{equation}
\mathbf{x}^T \big(A + B \big) \ \mathbf{y} - ur_{\mathrm{ave}} - vj_{\mathrm{ave}} - \alpha - \beta = 0
\end{equation}

The converse of theorem states that if $(\mathbf{x}^*, \mathbf{y}^*)$ is a NE pair, then, $\big( \mathbf{x}^*, \mathbf{y}^*, u^*,v^*, \alpha^*,\beta^*\big)$ is a global maximizer of \eqref{Eq_Quadratic_Program_Equivalent}. By using Lemma \ref{Lemma_Bimatrix_Game_Expected_Payoffs} and the necessary and sufficient conditions in Table~\ref{Table_KKT_Conditions_for_Bimatrix_Game} it can be easily verified that  
\begin{equation}
 {\mathbf{x}^*}^T \big(A + B \big) \ \mathbf{y}^* - u^*r_{\mathrm{ave}} - v^*j_{\mathrm{ave}} - \alpha^* - \beta^* = 0
\end{equation}
and hence the NE solution of the constrained bimatrix game $\mathcal{G} = \big( A,B, $ $\mathbf{r}, \mathbf{j}, r_{\mathrm{ave}}, j_{\mathrm{ave}} \big)$ is indeed a global maximum of the quadratic program defined in \eqref{Eq_Quadratic_Program_Equivalent}. This concludes the proof.
\end{proof}

%
%
\section{A Special Case: Packetized AWGN Link under \\ Power Limited Jamming}
\label{Sec_4_Case_Study}
\begin{figure}
\centering
 \includegraphics[width = 1\linewidth]{./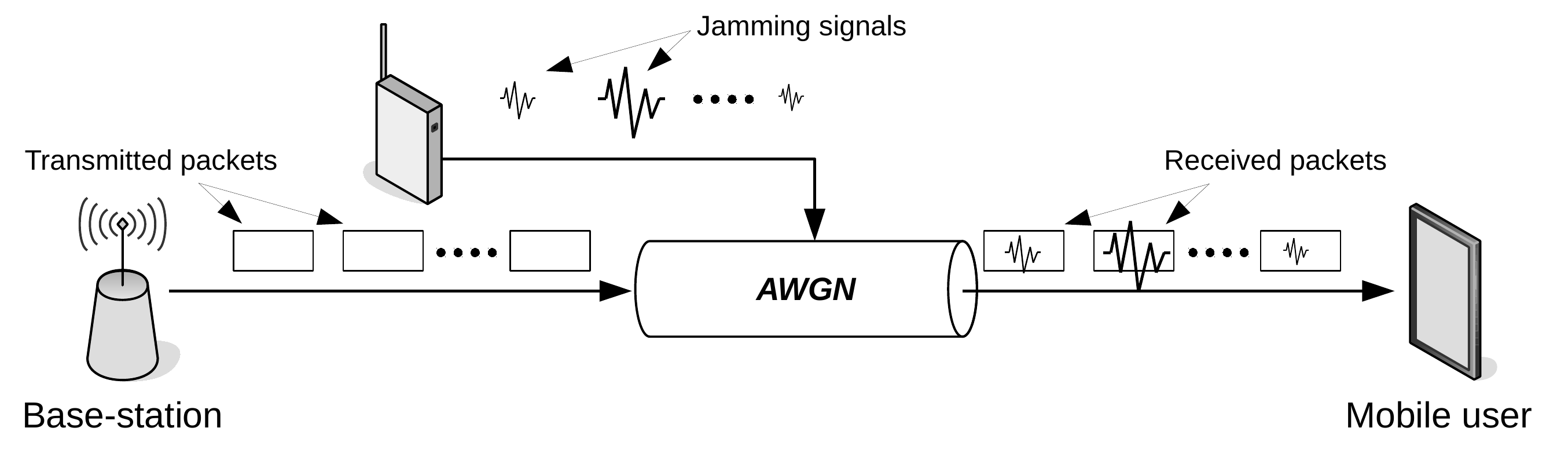}
 \caption{Packetized AWGN link under power limited jamming.}
 \label{Fig_System_Model}
\end{figure}
%
%

In this section we use the framework we developed in the previous sections to study a typical jamming problem and show that the constrained bimatrix game can be used to formulate this typical problem.

Consider the wireless communication system shown in Figure~\ref{Fig_System_Model}. The communication link between a base-station (transmitter) and a mobile user (receiver) is a single-hop, packet-switched, AWGN channel with fixed and known noise variance, $N$, measured at the receiver's side. Furthermore, assume the communication link is being disrupted by an average power limited additive Gaussian jammer with flat power spectral density. 
The impact of the Gaussian jammer on the communication link is the reduction of the effective signal to noise ratio (SNR) at the receiver from ${P_T}/{N}$ to ${P_T}/(N+J)$, where $J$ represents the jammer power (variance) and $P_T$ is the transmitter power, both measured at the receiver side.

We assume that the jammer uses a set of discrete jamming power levels denoted by $\mathcal{J}$. The jammer may use any jamming power but must maintain an
overall average power constraint, denoted by $J_{\text{ave}}$. The jammer uses  his available power levels according to a probability distribution (his strategy) and his goal is to  cause the maximum damage to the communication link by destroying as many packets as possible while maintaining the average power constraint.

The base-station has a rate adaptation block with $n$ {different} but {fixed} rates. 
Transmission rates are bounded between minimum and maximum rates denoted by $R_{\min}$ and $R_{\max}$, respectively.
Without loss of generality, we assume the rates are sorted in a decreasing order. Hence, the base-station's action set, denoted by $\mathcal{R}$, becomes
\begin{equation}\label{Eq:Trans_Action_Set_Arbitrary_Rates}
 \mathcal{R} = \big\{ { R_0=  R_{\max} > \cdot \cdot> R_i > \cdot \cdot > R_{n-1} = R_{\min}} \big\}_{\text{(nats/trans)}}
\end{equation} 
Assuming $R_{\max}$ is feasible and packets are long enough that channel capacity theorem could be applied to each packet,
it follows from the capacity of the discrete-time AWGN channel that we must have
\begin{equation}\label{Eq:Trans_Minimum_Power}
 P_T \geq P_{\min} = N \left( e^{2R_{\max}} -1 \right) \\
\end{equation}
to make all transmission rates viable. Throughout the rest of this section, we assume that the base-station transmits data packets at a fixed and known power, $P_T$, where $P_T\geq P_{\min}$. The base-station uses the available rates according to a probability distribution (his strategy) and his goal is to find an optimal strategy to maximize the average throughput of the channel subject to jamming.

Given that the channel noise variance is fixed and known, corresponding to each transmission rate $R_j \in \mathcal{R} $ there exists a certain jammer power, $\widehat{J_j} \geq 0$, 
such that if the actual jamming power used by the jammer is less than $\widehat{J_j}$, then 
reliable communication is possible, i.e.,
\begin{equation}\label{Eq:Jamming_Powers_Corresponding_Rates}
 \begin{aligned}
  & R_j = \frac{1}{2} \log \left( 1 + \frac{P_T}{N+ \widehat{J_j}} \right)  \\ 
  & \Rightarrow \qquad \widehat{J_j} = \frac{P_T}{e^{2R_j} - 1} - N \quad  j = 0,\cdots,  n-1 \\
 \end{aligned}
\end{equation}

Assuming that $\mathcal{R}$ is publicly available (such as the typical rates of IEEE~802.11 standard) and $P_T$ and $N$ could be estimated, the jammer can use~\eqref{Eq:Jamming_Powers_Corresponding_Rates} to construct his action set, specifically, consider the following action set
\begin{equation}
  \label{Eq:Jammer_Action_Set}
  \mathcal{J} = \Big\{ J_0, J_1, \cdots, J_j, \cdots, J_{n} \Big\}
\end{equation}
where $J_j$ for $0\leq j \leq n$ is given by
\begin{equation*}
  J_j = 
  \begin{cases}
    0 & j = 0\\
    \widehat{J}_{j-1} + \delta N = \frac{P_T}{e^{2R_{j-1}} - 1} + (\delta - 1)N & j = 1,\dots, n\\
  \end{cases}
\end{equation*}
The jammer adds $ \delta N $ with $\delta > 0 $ to his non-zero jamming powers to make sure that $R_{j}$ is greater than the channel capacity corresponding to $J_j$. Jammer's mixed-strategy set, $\mathbf{Y}^{n+1}_{J_{\text{ave}}}$, is then the set of all probability vectors that result in an average power less than or equal to $J_{\text{ave}}$, i.e.,
\begin{equation}\label{Eq:Jammer_Mixed_Set_LE}
 \mathbf{Y}^{n+1}_{J_{\text{ave}}} = \{ \boldsymbol{y}_{(n+1)\times 1} \in \mathbf{Y}^{n+1} \big| \; \boldsymbol{y}^T 
 \boldsymbol{J} \leq J_{\text{ave}} \}
\end{equation}
where $\boldsymbol{y}_{(n+1)\times 1}$ and $\boldsymbol{J}_{(n+1) \times 1}$ are jammer's mixed-strategy and jamming power vectors, respectively, and $\mathbf{Y}^{n+1}$ is a standard $(n+1)$-simplex.

Since destroyed packets do not contribute to the average throughput of the communication system, 
the payoff per transmitted packet, $C \big( R_i , J_j \big)$, for the pure-strategy pair $ \big( R_j , J_j \big)\in \mathcal{R} \times \mathcal{J}$ is equal to the transmission rate of that packet if the packet is recovered, and zero if it is destroyed, i.e.,
\begin{equation}\label{Eq_Utility_Function_Rate}
 C \big( R_i , J_j \big)_{\text{nats/trans.}} = 
  \begin{cases}
   R_i & j < i		\\
   0   & j \geq i	\\
  \end{cases}
  \quad (R_i,\ J_j)  \in \mathcal{R} \times \mathcal{J}
\end{equation}
%

%
%
\begin{figure}
\centering
  \includegraphics[width = 1.0\linewidth]{./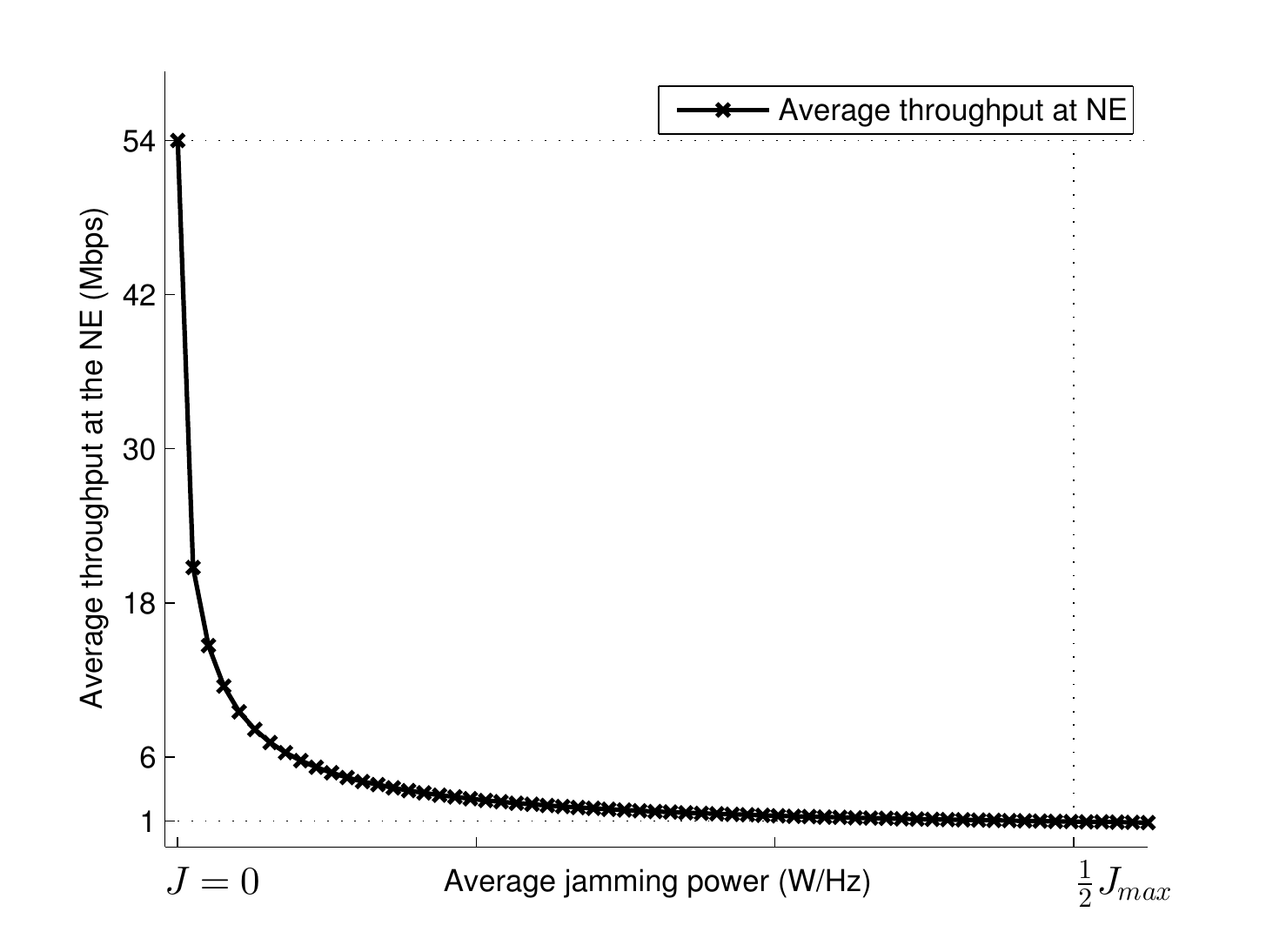}  
  \caption{Average throughout at the NE as a function of $J_{\text{ave}}$.}
  \label{Fig_2}
\end{figure}
%
%

Therefore the payoff matrix corresponding to  \eqref{Eq_Utility_Function_Rate}, where the base-station is the row player, will be an $n\times (n+1)$ matrix with zero elements above the main diagonal, i.e.,
\begin{equation}
\label{Eq_Payoff_Matrix_Trans}
  C = 
  \left[
  \renewcommand{\arraystretch}{.7}
  \renewcommand{\arraycolsep}{2pt}
 \begin{array}[m]{cccccc}
  R_0     & 0      &        & \cdots &         & 0      \\
  \vdots  &        & \ddots &        &         & \vdots \\ 
  R_i     & \cdots & R_i    & 0      & \cdots  & 0      \\
  \vdots  &        &        &        & \ddots  & \vdots \\
  R_{n-1} &        &        & \cdots & R_{n-1} & 0      \\
 \end{array}
 \right]_{n\times (n+1)} 
\end{equation}
Let $\boldsymbol{x}_{n\times 1}$ and $\boldsymbol{y}_{(n+1)\times 1 }$ be the base-station's and jammer's mixed-strategies, respectively. Then the base-station's problem becomes the following maximization problem
\begin{equation}
 \label{Eq_Transmitter_Probelm}
 \underset{\boldsymbol{x} \in \mathbf{X}^n}{\text{maximize}} \quad \boldsymbol{x}^T C \boldsymbol{y} \qquad \text{for all}\  \boldsymbol{y} \in \mathbf{Y}^{n+1}_{J_{\text{ave}}}
\end{equation}
It can be proved \cite{Koorosh_2013} that this problem has closed form solution and the average throughput at the Nash equilibrium as a function of $J_{\text{ave}}$ is given by
\begin{equation}
 \label{Eq_Average_Throughput_at_NE}
 C\big( \boldsymbol{x}^*, \boldsymbol{y}^* \big) = 
  \frac{J_{m+1} - J_{\text{ave}}}{J_{m+1} - J_{\text{ave},m}} R_m 
    \quad 
    J_{\text{ave},m} \leq J_{\text{ave}} < J_{\text{ave},m+1}
\end{equation}
for $1\leq m < n-1$ and $J_{\text{ave},m}$ is defined as 
\begin{equation}
\label{Eq_J_ave_m}
 J_{\text{ave},m} = R_m \sum^{m}_{j=1} \left( R_j^{-1} - R_{j-1}^{-1} \right) J_{j} \qquad  1\leq m \leq n
\end{equation}
Figure~\ref{Fig_2} shows the average throughput of the communication link at the NE as a function of jammer's average power for a typical case. 
For this example, we use the range of rates from the IEEE~802.11 standard, i.e., we assume coded data rates of the base station are distributed between $R_{\min} = 1$~Mbps and $R_{\max} = 54$~Mbps and the channel bandwidth is $22$~MHz.

Since  jammer's goal is to maximize the number of destroyed packets, we define the jammer's payoff per packet to be $1$ if the packet is destroyed and $0$ if the packet is recovered. Thus, the jammer's utility function for the pure-strategy pair $(R_i,\ J_j)$ becomes
\begin{equation}
 \label{Eq_Utility_Function_Jammer}
 J \big( R_i , J_j \big) = 
  \begin{cases}
   0 & j < i		\\
   1   & j \geq i	\\
  \end{cases}
  \qquad (R_i,\ J_j)  \in \mathcal{R} \times \mathcal{J}
\end{equation}
and the payoff matrix corresponding to \eqref{Eq_Utility_Function_Jammer} becomes
\begin{equation}
\label{Eq_Payoff_Matrix_Jammer}
  J^T = 
  \left[
  \renewcommand{\arraystretch}{.7}
  \renewcommand{\arraycolsep}{2pt}
 \begin{array}{cccccc}
  0      & 1      &        & \cdots &        & 1 \\
  \vdots &        & \ddots &        &        & \vdots \\ 
  0      & \cdots & 0      & 1      & \cdots & 1 \\
  \vdots &        &        &        & \ddots & \vdots \\
  0      &        &        & \cdots & 0      & 1 \\
 \end{array}
 \right]_{n\times (n+1)}
\end{equation}

Comparison of the payoff matrices in \eqref{Eq_Payoff_Matrix_Trans} and \eqref{Eq_Payoff_Matrix_Jammer} clearly shows  base-station's and  jammer's conflicting goals; while the base-station's non-zero payoffs appear on or below the main diagonal of his payoff matrix ($C$), jammer's non-zero payoffs are above the main diagonal of his respective payoff matrix ($J^T$).
But in contrast to the zero-sum games, the sum of the two matrices in \eqref{Eq_Payoff_Matrix_Trans} and \eqref{Eq_Payoff_Matrix_Jammer} is not zero.

Since the jammer's utility function is not the negative of the base-station's utility function the jammer can play two different games to cause damage to the performance of the communication link. The jammer can simply ignore base-station's utility function and maximize his average utility based on his own payoff matrix. This game is equivalent to a constrained zero-sum game with matrix $J$ given in \eqref{Eq_Payoff_Matrix_Jammer} and average power constraint $J_{\text{ave}}$ where the jammer is the row player (maximizer).

It can be easily verified that any row in jammer's payoff matrix ($J$) is dominated by the last row which corresponds to his maximum jamming power ($J_{n}$). But because of the average jamming power constraint, $J_{\text{ave}}$, the jammer cannot use $J_{n}$ all the time. As a result the optimal strategy for the jammer is to use his maximum jamming power with probability $p = {J_{\text{ave}}}/{J_n}$ and not jam a packet with probability $(1-p)$. Therefore, jammer's expected payoff (average destroyed packets) as a function of his average power for the constrained zero-sum game becomes
\begin{equation}
\label{Eq_Payoff_Jammer_Zero_Sum}
 J^*_{\text{zero-sum}} \big( J_{\text{ave}} \big) = \frac{1}{J_n} J_{\text{ave}} \qquad 0\leq J_{\text{ave}} \leq J_n
\end{equation}
The optimal strategy for this zero-sum game (this strategy is called the jammer's \emph{maxmin} strategy) guarantees the payoff given in \eqref{Eq_Payoff_Jammer_Zero_Sum} regardless of the base-station's strategy.

An alternative approach for the jammer is to play the constrained bimatrix game $\mathcal{G} = \big( C, \ J^T,\ \boldsymbol{R},\ R_{\mathrm{ave}}, \ \boldsymbol{J},\ J_{\text{ave}} \big)$, where $\boldsymbol{R}$ is the base station rate vector. Since in this special case the base station does not have an average constraint on its strategies, $R_{\mathrm{ave}}$ is an arbitrary number that satisfies $R_{\mathrm{ave}} > \max R_i$. With this assumption, the condition (I.2) in Table~\ref{Table_KKT_Conditions_for_Bimatrix_Game} becomes redundant and from condition (I.5) in Table~\ref{Table_KKT_Conditions_for_Bimatrix_Game} it follows that $u=0$, hence, the quadratic program in~\eqref{Eq_Quadratic_Program_Equivalent} simplifies to
\begin{equation}
\label{Eq_Jammer_Problem_Bimatrix}
 \underset{\boldsymbol{x}, \boldsymbol{y},v,\alpha, \beta}{\text{maximize}} \quad \boldsymbol{x}^T \big( C + J^T \big) \boldsymbol{y} -vJ_{\text{ave}} -\alpha -\beta
\end{equation}
subject to
\begin{equation}
\left\{
 \begin{array}{l}
  C\boldsymbol{y} - \alpha \boldsymbol{1} \leq \boldsymbol{0} \\
  \boldsymbol{x}J^T - v\boldsymbol{J} - \beta \boldsymbol{1} \leq \boldsymbol{0} \\
  \boldsymbol{J}^T \boldsymbol{y} - J_{\text{ave}} \leq 0 \\
  \boldsymbol{1}^T \boldsymbol{y} - 1 = 0 \\
  \boldsymbol{1}^T \boldsymbol{x} - 1 = 0 \\
  \boldsymbol{x},\ \boldsymbol{y},\ v \geq 0 \\
 \end{array}
 \right.
\end{equation}
and the expected payoff of the jammer at the NE becomes
\begin{equation}
 J^*_{\text{bimatrix}}\big( J_{\mathrm{ave}} \big) = v^*J_{\mathrm{ave}} + \beta^*
\end{equation}
where $v^*$ and $\beta^*$ are the global maximizers of \eqref{Eq_Jammer_Problem_Bimatrix}. 

It can be shown (see Theorem~\ref{Theorem_Case_Study}) that for certain values of $J_{\mathrm{ave}}$, the maximization problem in~\eqref{Eq_Jammer_Problem_Bimatrix} has a closed form solution. For these specific values, the expected payoff of the base station is equal to $R_m, m = 0, \dots, n-1$.

%
%
\begin{figure}[t]
\centering
  \includegraphics[width = 1.0\linewidth]{./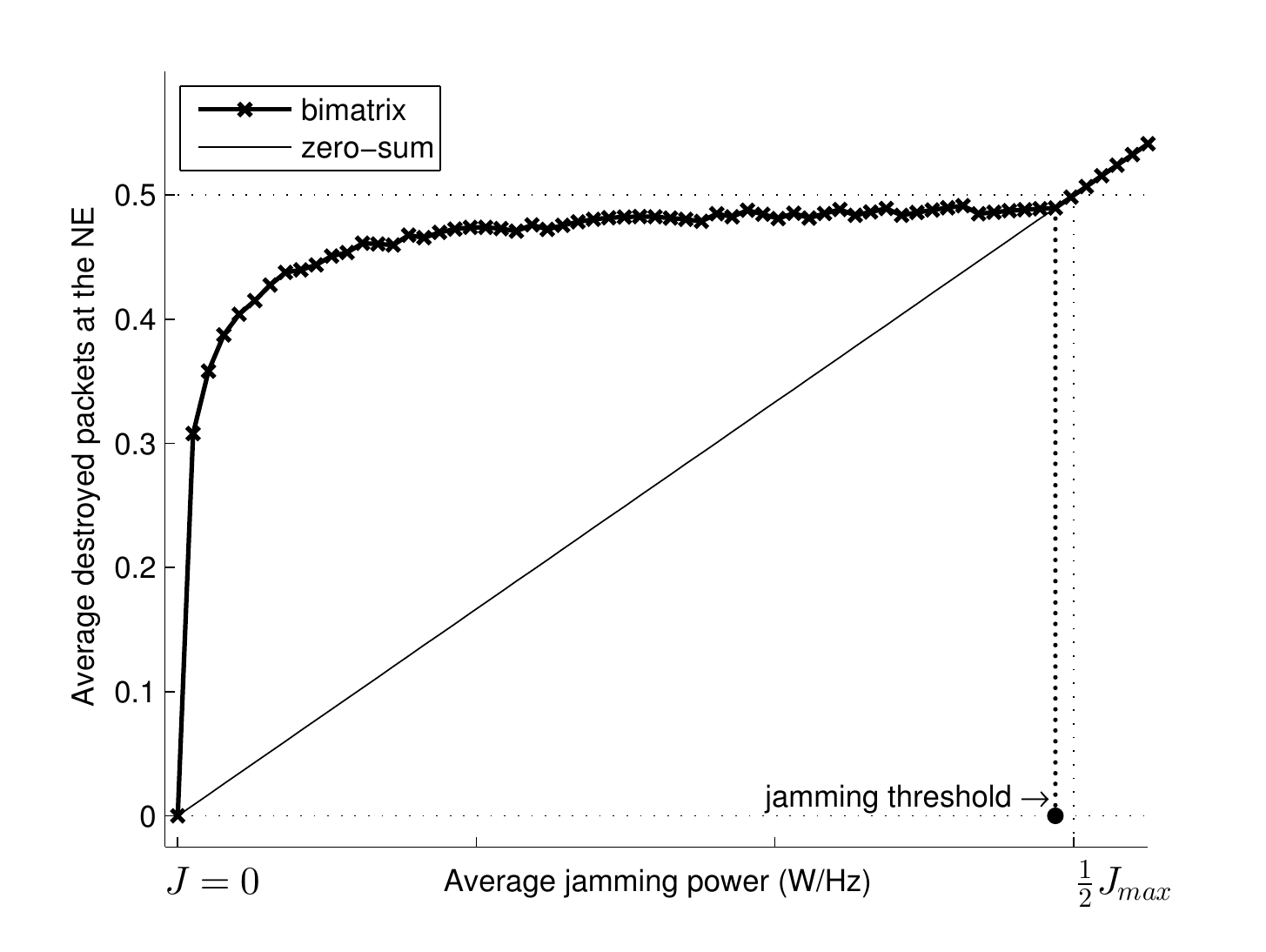} 
  \caption{Average destroyed packets at the NE as a function of $J_{\text{ave}}$.}
  \label{Fig_3}
\end{figure}
%
%

\begin{theorem}
\label{Theorem_Case_Study}
	In the constrained bimatrix game $\mathcal{G} = (C,\ J^T,$ $ \mathbf{R}, R_{\mathrm{ave}},$ $ \boldsymbol{J},\ J_{\mathrm{ave}})$ let
	\begin{equation}
		J_{\mathrm{ave}} = R_m \sum_{i=0}^{m} (R_i^{-1} - R_{i-1}^{-1}) J_i \quad \text{for}\ m = 0, \dots, n-1
	\end{equation}
	then, equilibrium pair solution, $(\boldsymbol{x^*}, \boldsymbol{y^*}) $, and the optimal mixed-strategies are given by
	\begin{equation}
	\begin{aligned}
		& \boldsymbol{x^*}^T = [ x_0, \dots, x_i, \dots x_m, \boldsymbol{0}] \quad x_i = J_m^{-1} ( J_i - J_{i-1} ) \\
		& \boldsymbol{y^*}^T = [ y_0, \dots, y_i, \dots y_m, \boldsymbol{0}] \quad y_i = R_m (R_{i}^{-1} - R_{i-1}^{-1} ) 
	\end{aligned}
	\end{equation}
	for the base station and the jammer respectively (where we used $R_{-1}^{-1} = J_{-1} = 0$). Furthermore, the expected payoffs of $\mathcal{G}$ at the NE are
	\begin{equation}
	\begin{aligned}
		& \boldsymbol{x^*}^T C \ \boldsymbol{y^*} = R_m \\
		& \boldsymbol{y^*}^T J \ \boldsymbol{x^*} = R_m J_{m}^{-1} \sum_{i=0}^{m} (R_{i}^{-1} - R_{i-1}^{-1})J_i \\
	\end{aligned}
	\end{equation}
\end{theorem}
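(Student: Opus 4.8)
The plan is to exhibit scalars that make the stated pair a global maximizer of the $u$-eliminated quadratic program~\eqref{Eq_Jammer_Problem_Bimatrix}: take $u^*=0$ (forced since $R_{\mathrm{ave}}>\max R_i$), $v^*=1/J_m$, $\alpha^*=R_m$, $\beta^*=0$. Recall from the proof of Theorem~\ref{Theorem_Quadratic_Program} that the objective of~\eqref{Eq_Jammer_Problem_Bimatrix} is $\le 0$ at every feasible point, so any feasible point where it vanishes is a global maximizer and hence, by Theorem~\ref{Theorem_Quadratic_Program}, a Nash equilibrium of $\mathcal{G}$. Thus the whole proof reduces to checking that $(\boldsymbol{x^*},\boldsymbol{y^*},v^*,\alpha^*,\beta^*)$ is feasible and that the objective equals $0$ there; the two payoff formulas then follow from Lemma~\ref{Lemma_Bimatrix_Game_Expected_Payoffs}. (Equivalently, one may verify conditions (I.1)--(I.7) and (II.1)--(II.7) of Table~\ref{Table_KKT_Conditions_for_Bimatrix_Game} with these multipliers and invoke Theorem~\ref{Theorem_2}; the computations are the same.)

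Every identity needed is a telescoping sum, using $R_{-1}^{-1}=J_{-1}=0$. Feasibility: $\sum_{i=0}^{m}x_i=J_m^{-1}\sum_{i=0}^{m}(J_i-J_{i-1})=1$, $\sum_{i=0}^{m}y_i=R_m\sum_{i=0}^{m}(R_i^{-1}-R_{i-1}^{-1})=1$, the entries are nonnegative because $\mathcal{R},\mathcal{J}$ are sorted ($R_i<R_{i-1}$, $J_i>J_{i-1}$), and $\boldsymbol{J}^T\boldsymbol{y^*}=R_m\sum_{i=0}^{m}(R_i^{-1}-R_{i-1}^{-1})J_i=J_{\mathrm{ave}}$ is precisely the hypothesis, so the power constraint holds with equality. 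For the matrix inequalities I would compute $C\boldsymbol{y^*}$ and $\boldsymbol{x^*}^T J^T$ entrywise: a telescoping computation gives $(C\boldsymbol{y^*})_i=R_i\sum_{j=0}^{i}y_j=R_iR_mR_i^{-1}=R_m$ for $i\le m$ and $(C\boldsymbol{y^*})_i=R_i\sum_{j=0}^{m}y_j=R_i<R_m$ for $i>m$, so $C\boldsymbol{y^*}-\alpha^*\mathbf{1}\le\mathbf{0}$ with equality exactly on the support of $\boldsymbol{x^*}$; symmetrically $(\boldsymbol{x^*}^T J^T)_j=\sum_{i=0}^{j}x_i=J_m^{-1}J_j$ for $j\le m$ and $=1\le J_j/J_m$ for $j\ge m$, so $\boldsymbol{x^*}^T J^T-v^*\boldsymbol{J}-\beta^*\mathbf{1}\le\mathbf{0}$ with equality on the support of $\boldsymbol{y^*}$. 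Consequently $\boldsymbol{x^*}^T C\,\boldsymbol{y^*}=R_m=\alpha^*$ and $\boldsymbol{x^*}^T J^T\boldsymbol{y^*}=J_m^{-1}\sum_{j=0}^{m}J_jy_j=v^*J_{\mathrm{ave}}+\beta^*=R_mJ_m^{-1}\sum_{i=0}^{m}(R_i^{-1}-R_{i-1}^{-1})J_i$ (these are the asserted payoffs, also given by Lemma~\ref{Lemma_Bimatrix_Game_Expected_Payoffs}), and the objective of~\eqref{Eq_Jammer_Problem_Bimatrix} evaluates to $\boldsymbol{x^*}^T(C+J^T)\boldsymbol{y^*}-v^*J_{\mathrm{ave}}-\alpha^*-\beta^*=R_m+J_{\mathrm{ave}}/J_m-J_{\mathrm{ave}}/J_m-R_m=0$.

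The step I expect to be most delicate is the off-support half of the two matrix inequalities --- that $(C\boldsymbol{y^*})_i\le R_m$ for $i>m$ and $(\boldsymbol{x^*}^T J^T)_j\le J_j/J_m$ for $j>m$ --- since this is the only place where the \emph{ordering} of the rate and power sets, rather than pure telescoping algebra, enters, and it is exactly what certifies that $\boldsymbol{x^*}$ and $\boldsymbol{y^*}$ are mutual best responses and not merely feasible. I would also handle $m=0$ separately: the hypothesis then forces $J_{\mathrm{ave}}=R_0R_0^{-1}J_0=0$, so $J_m=J_0=0$ and the expressions $x_i=J_m^{-1}(J_i-J_{i-1})$, $v^*=1/J_m$ are to be read as the degenerate zero-jamming solution $\boldsymbol{x^*}=\boldsymbol{y^*}=\mathbf{e}_0$ (the base station always transmits $R_{\max}$, the jammer stays silent), which satisfies the equilibrium conditions trivially with payoffs $R_0$ and $0$.
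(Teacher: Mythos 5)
Your proposal is correct and follows essentially the same route as the paper's proof: exhibit $v=J_m^{-1}$, $\alpha=R_m$, $\beta=0$ and show that the objective of \eqref{Eq_Jammer_Problem_Bimatrix} vanishes at the claimed point, so that it is a global maximizer and hence, by Theorem~\ref{Theorem_Quadratic_Program}, a Nash equilibrium of $\mathcal{G}$. The only difference is completeness: the paper stops at the scalar identity, whereas you also carry out the feasibility/complementary-slackness checks (telescoping sums, off-support inequalities, the binding power constraint) and the degenerate $m=0$ case that the paper leaves implicit.
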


\begin{proof}
	It is sufficient to show that there exist $v > 0 $ and $\alpha, \beta \in \mathbb{R}$ for which~\eqref{Eq_Jammer_Problem_Bimatrix} is zero. Let $v = J_{m}^{-1}$, $\alpha = R_m$ and $\beta  = 0$, then 
	\begin{equation}
		\boldsymbol{x^*}^T \big( C + J^T \big) \boldsymbol{y^*} -J_{m}^{-1}J_{\text{ave}} -R_m = 0
	\end{equation}

\end{proof}

Analytical study and numerical simulations verify that the expected payoff at the NE for the constrained bimatrix game strictly outperforms the zero-sum game if the average jamming power is less than a \emph{jamming threshold}, $J_{\text{TH}}$. That is, the expected payoff of the bimatrix game satisfies
\begin{equation}
 J^*_{\text{bimatrix}}\big( J_{\mathrm{ave}} \big) > J^*_{\text{zero-sum}}\big( J_{\mathrm{ave}} \big) \ \ \text{for all} \ 0 < J_{\text{ave}} < J_{\text{TH}}
\end{equation}
and
\begin{equation}
 J^*_{\text{bimatrix}}\big( J_{\mathrm{ave}} \big) = J^*_{\text{zero-sum}}\big( J_{\mathrm{ave}} \big) \ \ \text{for all} \  J_{\text{ave}} \geq J_{\text{TH}}
\end{equation}
The jamming threshold, $J_{\text{TH}}$, is the minimum average jamming power required 
to force a transmitter to operate at his lowest rate in a single-hop packetized wireless link%
\footnote{Theoretical analysis suggests that for a single-hop packetized communication link under power limited jamming, such a threshold always exists \cite{Koorosh_2013}, \cite{Koorosh_2012}. Experimental studies confirm the existence of such a threshold on jammer's average power~\cite{hanawal2014game}.}%
.  
It can be proved (see \cite{Koorosh_2013} Theorem~4) that the minimum average jamming power that can force the transmitter to use his lowest rate is given by
\begin{equation}
\label{Eq_Jamming_Threshold}
 J_{\text{TH}} = R_{n-1} \sum^{n-1}_{j=1} \big( R^{-1}_{j} - R^{-1}_{j-1} \big) J_{j}
\end{equation}

Figure~\ref{Fig_3} shows a comparison between the expected payoff of the zero-sum game and the constrained bimatrix game for a typical case. As expected, the average payoff of the bimatrix game at the NE strictly dominates the zero-sum game for $J_{\text{ave}} < J_{\text{TH}}$ and the expected payoffs converge for $J_{\text{ave}} \geq J_{\text{TH}}$, i.e., the bimatrix game simplifies to a zero-sum game.

%
%
\section{Conclusion}
\label{Sec_5_Conclusion}
We developed a constrained bimatrix game framework that can be used to model many practical jamming problems in packetized wireless networks. In contrast to the standard bimatrix games, in constrained bimatrix games the players' strategies must satisfy some additional average conditions, consequently, not all strategies are feasible and the existence of the NE is not
guaranteed anymore. 
We provided the necessary and sufficient conditions under which the existence of the Nash equilibrium (NE) is guaranteed and showed that the equilibrium pairs and the Nash equilibrium solution of this constrained game corresponds to the global maximum of a quadratic program. Finally, we studied a typical packetized wireless link under power limited jamming and showed that the game theoretic analysis of this typical problem yields rather surprising results.
%
%
%
%
\appendix[Existence of the Nash Equilibrium for \\ the Constrained Bimatrix Game]
\label{Appendix_I}
%
%

%
%
\begin{figure}[t]
 \centering
 \includegraphics[width = 1\linewidth]{./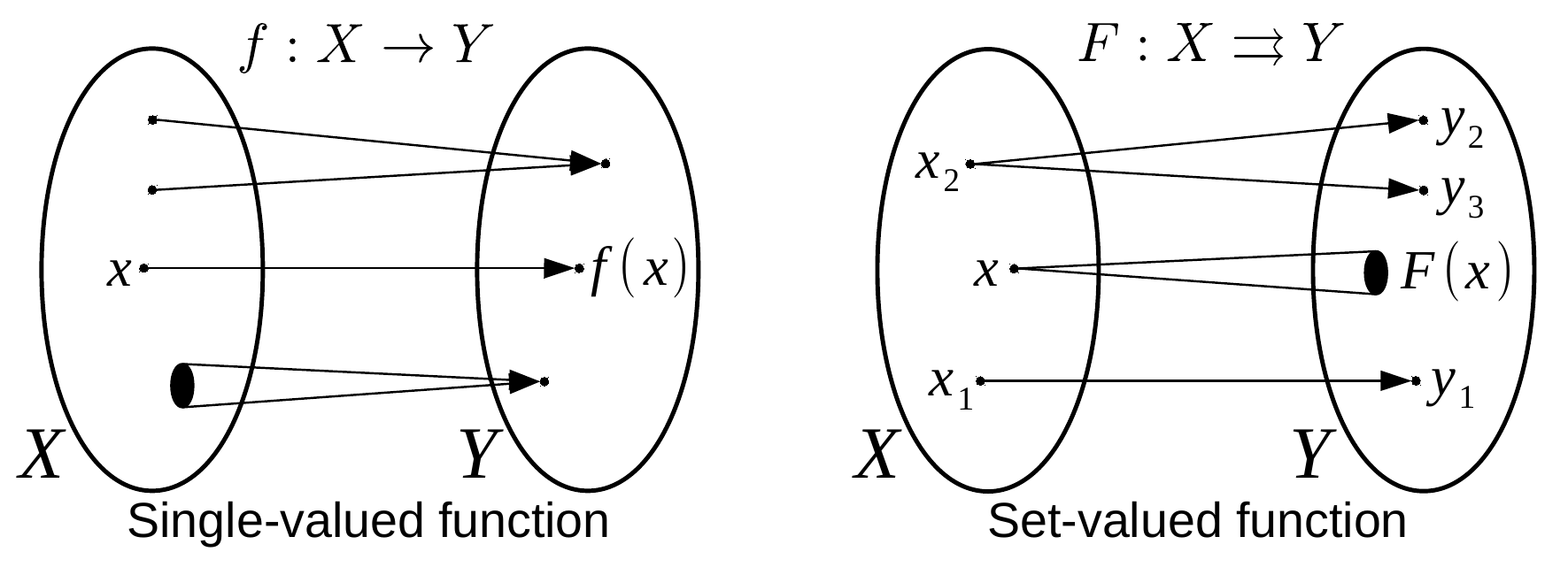}
 \caption{Single-valued function (left) vs. set-valued function (right).}
 \label{Fig_Set_Valued_Fuction}
\end{figure}
%
%

In game theory, fixed-point theorems are commonly used to prove that a model has an equilibrium point. In particular, Brouwer's fixed point theorem is often used to prove the existence of a solution for finite games (e.g., see~\cite{Nash51}), however, the approach used in~\cite{Nash51} (and similar approaches) cannot be extended to constrained game. 
As a consequence of the average constrains on mixed-strategies, arbitrary probabilities cannot be assigned to some pure-strategies and as a result, a more general approach is required.
In our approach, we use the \emph{Kakutani's fixed point theorem} to prove that our constrained bimatrix model has at least one equilibrium point. We start this section by providing some definitions.

  {\it Set-valued function} (set-function or correspondence): Denoted by $F: X \rightrightarrows Y$ is mapping from $X$ to non-empty subsets of $Y$, i.e., for all  $x\in X $ we have $ F(x) \in  2^Y -\ \emptyset$. As opposed to a \emph{single-valued function} (or simply, a function), a set-valued function can map its input to more than one output.  Figure~\ref{Fig_Set_Valued_Fuction} shows a comparison of a single-valued function and a set-valued function.

  {\it Convexed-valued function:} Let $F : X \rightrightarrows Y $ be a set-valued function then $F$ is convex-valued if $F(x)$ is a convex set for all $x\in X$.

  {\it Upper semi-continuous:} 
  $F$ is upper semi-continuous if the following holds: for every sequence $x_k $ in $ X$ that converges to some point $x \in X$ and for every sequence $y_k$ in $Y$ that converges to $y \in Y$, if $y_k \in F(x_k)$ for all $k\in\mathbb{N}$, then $y \in F(x)$.

  {\it Fixed point of a set-valued function:} Let $F: Z \rightrightarrows Z $ be a set-valued function then $x^* \in Z$ is a fixed point of $F$ if $x^* \in F(x^*)$.

The following theorem, known as Kakutani's fixed point theorem, provides the sufficient conditions for a set-valued function defined on a subset of Euclidean space to have a fixed point.

%
%
\begin{theorem}[ Kakutani Fixed Point Theorem]
  Let $Z \subseteq \mathbb{R}^n$  be a nonempty compact and convex set and let $F\ : Z \rightrightarrows Z $ be an upper semi-continuous and convex-valued correspondence. Then F has a fixed point.
\end{theorem}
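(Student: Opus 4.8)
The plan is to derive Kakutani's theorem from \emph{Brouwer's fixed point theorem} by approximating the correspondence $F$ from the outside by a sequence of single-valued continuous maps. First I would reduce to the case where $Z$ is a simplex: since $Z$ is a nonempty compact convex subset of $\mathbb{R}^n$, it is contained in some $n$-simplex $\Delta$, and the nearest-point (metric) projection $r : \Delta \to Z$ is continuous; replacing $F$ by $G(x) \triangleq F(r(x))$ yields an upper semi-continuous, convex-valued, nonempty-valued correspondence $G : \Delta \rightrightarrows \Delta$ whose image lies in $Z$. Any fixed point of $G$ automatically lies in $Z$, where $r$ acts as the identity, and is therefore a fixed point of $F$. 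So it suffices to prove the theorem when $Z = \Delta$ is a simplex.

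Next I would build the approximating maps. Fix a sequence of simplicial subdivisions of $\Delta$ whose mesh tends to $0$. For the $k$-th subdivision, pick for every vertex $w$ of the subdivision some point $g_k(w) \in G(w)$ (possible since $G(w)$ is nonempty), and extend $g_k$ affinely over each sub-simplex. Because $\Delta$ is convex and $g_k$ sends vertices into $\Delta$, the extension is a continuous self-map $g_k : \Delta \to \Delta$; by Brouwer's fixed point theorem it has a fixed point $x_k = g_k(x_k)$. Writing $x_k$ in barycentric coordinates relative to the sub-simplex $\sigma_k$ that contains it, say $x_k = \sum_{j=0}^{n} \lambda_j^k \, w_j^k$ with $\lambda_j^k \geq 0$, $\sum_j \lambda_j^k = 1$, and $w_j^k$ the vertices of $\sigma_k$, affineness of $g_k$ on $\sigma_k$ gives $x_k = g_k(x_k) = \sum_{j=0}^{n} \lambda_j^k \, g_k(w_j^k)$ with $g_k(w_j^k) \in G(w_j^k)$.

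Finally I would pass to the limit. Using compactness of $\Delta$ and of the coefficient simplex, extract a subsequence along which $x_k \to x^*$, $\lambda_j^k \to \lambda_j^*$, and $g_k(w_j^k) \to z_j$ for each $j$; since the mesh tends to $0$ we also get $w_j^k \to x^*$ for every $j$. Upper semi-continuity of $G$ applied to the pairs $(w_j^k, g_k(w_j^k))$ gives $z_j \in G(x^*)$ for each $j$, and taking limits in $x_k = \sum_j \lambda_j^k g_k(w_j^k)$ gives $x^* = \sum_j \lambda_j^* z_j$. Since $G(x^*)$ is convex and contains every $z_j$, we conclude $x^* \in G(x^*)$, hence $x^*$ is a fixed point. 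The main obstacle is exactly this limiting step: one must simultaneously control the (arbitrary) vertex selections $g_k(w_j^k)$, ensure all vertices of the shrinking sub-simplices converge to the same point $x^*$, and combine upper semi-continuity with convex-valuedness so that the limiting convex combination lands back inside $F(x^*)$. Brouwer's theorem itself (via Sperner's lemma or the no-retraction theorem) is taken as the black box on which the argument rests.
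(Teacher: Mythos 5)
Your proposal is correct, but be aware that the paper does not prove this statement at all: its ``proof'' is a one-line citation to Kakutani's original 1941 paper, since the theorem is only used as a black box in the appendix. What you have reconstructed is essentially the classical derivation found in that reference: reduce to a simplex, approximate the correspondence by piecewise-affine single-valued maps built from arbitrary selections at the vertices of subdivisions with mesh tending to zero, apply Brouwer to each approximant, and pass to the limit using the sequential closed-graph form of upper semi-continuity together with convexity of the values. Your sketch handles the delicate points correctly; two items worth spelling out in a full write-up are (i) that upper semi-continuity of $G = F \circ r$ follows from continuity of the metric projection $r$ combined with the closed-graph definition of upper semi-continuity --- which is precisely the definition the paper adopts in its appendix, so the limit step $z_j \in G(x^*)$ needs no extra closedness hypothesis on the values --- and (ii) that the subsequence extraction involves only finitely many sequences ($x_k$, the barycentric weights, and the $n+1$ vertex images), all ranging in compact sets, so one nested subsequence suffices. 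Compared with the paper, your route buys self-containedness, with Brouwer (via Sperner or no-retraction) as the only unproved ingredient, at the cost of roughly a page of argument; the paper's choice to cite instead is reasonable given that its actual contribution is the existence proof for the constrained game built on top of this theorem.
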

\begin{proof}
	See \cite{kakutani1941generalization}.
\end{proof}

%
%
\begin{theorem}
  Let $\mathcal{G} = \big(A, B,$ $ \mathbf{r}, \mathbf{j}, r_{\mathrm{ave}}, j_{\mathrm{ave}}\big)$ be a constrained bimatrix game,
  then $\mathcal{G}$ has a Nash equilibrium solution if $r_{\mathrm{ave}}\geq \min r_i$ and $j_{\mathrm{ave}} \geq \min j_k$.
\end{theorem}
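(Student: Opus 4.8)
The plan is to establish existence by applying Kakutani's fixed point theorem to the joint best-response correspondence over the constrained strategy sets $\widehat{\mathbf{X}}$ and $\widehat{\mathbf{Y}}$ of \eqref{Eq_Player_1_Constrained_Strategy_Set} and \eqref{Eq_Player_2_Constrained_Strategy_Set}. First I would check that $\widehat{\mathbf{X}}\times\widehat{\mathbf{Y}}$ is a nonempty, compact, convex subset of $\mathbb{R}^{m+n}$. Compactness and convexity are immediate: each set is the intersection of a standard simplex with a closed half-space ($\mathbf{r}^T\mathbf{x}\le r_{\mathrm{ave}}$, respectively $\mathbf{j}^T\mathbf{y}\le j_{\mathrm{ave}}$), hence a bounded polyhedron. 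Nonemptiness is exactly where the hypothesis is used: since the action sets are sorted, $\min r_i=r_1$ and $\min j_k=j_1$, so the unit vector $\mathbf{e}_1$ concentrating all probability on the smallest action satisfies $\mathbf{r}^T\mathbf{e}_1=r_1\le r_{\mathrm{ave}}$ and $\mathbf{j}^T\mathbf{e}_1=j_1\le j_{\mathrm{ave}}$, so $\mathbf{e}_1\in\widehat{\mathbf{X}}$ and $\mathbf{e}_1\in\widehat{\mathbf{Y}}$. (It is worth remarking that this condition is sharp for the feasible region: if $r_{\mathrm{ave}}<\min r_i$ then $\widehat{\mathbf{X}}=\emptyset$ and no equilibrium can exist.)

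Next I would introduce the best-response correspondences
\[
 \mathrm{BR}_1(\mathbf{y})=\Big\{\mathbf{x}\in\widehat{\mathbf{X}}\ \Big|\ \mathbf{x}^T A\,\mathbf{y}\ge {\mathbf{x}'}^T A\,\mathbf{y}\ \ \text{for all}\ \mathbf{x}'\in\widehat{\mathbf{X}}\Big\},
\]
\[
 \mathrm{BR}_2(\mathbf{x})=\Big\{\mathbf{y}\in\widehat{\mathbf{Y}}\ \Big|\ \mathbf{x}^T B\,\mathbf{y}\ge \mathbf{x}^T B\,\mathbf{y}'\ \ \text{for all}\ \mathbf{y}'\in\widehat{\mathbf{Y}}\Big\},
\]
and show each is nonempty-valued, convex-valued, and upper semi-continuous. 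Nonemptiness follows from the Weierstrass theorem, as the objective is continuous (affine) in the optimized variable and the feasible set is nonempty and compact; convex-valuedness follows because the maximizers of an affine function over a convex set form a face of that set, hence a convex set; upper semi-continuity follows from Berge's maximum theorem, using the fact that here the constraint set $\widehat{\mathbf{X}}$ (resp. $\widehat{\mathbf{Y}}$) does not vary with the opponent's strategy and the payoff $\mathbf{x}^TA\mathbf{y}$ (resp. $\mathbf{x}^TB\mathbf{y}$) is jointly continuous in $(\mathbf{x},\mathbf{y})$. I would then define the joint correspondence $F:\widehat{\mathbf{X}}\times\widehat{\mathbf{Y}}\rightrightarrows\widehat{\mathbf{X}}\times\widehat{\mathbf{Y}}$ by $F(\mathbf{x},\mathbf{y})=\mathrm{BR}_1(\mathbf{y})\times\mathrm{BR}_2(\mathbf{x})$, which inherits nonempty convex values and upper semi-continuity (a product of closed-graph, compact-valued correspondences again has closed graph). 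All hypotheses of Kakutani's fixed point theorem now hold, so $F$ has a fixed point $(\mathbf{x}^*,\mathbf{y}^*)$, i.e. $\mathbf{x}^*\in\mathrm{BR}_1(\mathbf{y}^*)$ and $\mathbf{y}^*\in\mathrm{BR}_2(\mathbf{x}^*)$. By construction this means $\mathbf{x}^*$ maximizes \eqref{Eq_Player_1_Constrained_Problem} against $\mathbf{y}^*$ and $\mathbf{y}^*$ maximizes \eqref{Eq_Player_2_Constrained_Problem} against $\mathbf{x}^*$, which is precisely the definition of a constrained Nash equilibrium of $\mathcal{G}$.

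The step I expect to demand the most care is the upper semi-continuity of the best-response maps. Rather than merely citing Berge, I would prefer to verify the closed-graph property directly: given sequences $(\mathbf{x}_k,\mathbf{y}_k)\to(\mathbf{x},\mathbf{y})$ in $\widehat{\mathbf{X}}\times\widehat{\mathbf{Y}}$ with $\mathbf{x}_k\in\mathrm{BR}_1(\mathbf{y}_k)$, one has $\mathbf{x}_k^T A\,\mathbf{y}_k\ge {\mathbf{x}'}^T A\,\mathbf{y}_k$ for every fixed $\mathbf{x}'\in\widehat{\mathbf{X}}$; since $\widehat{\mathbf{X}}$ is closed the limit $\mathbf{x}$ lies in $\widehat{\mathbf{X}}$, and passing to the limit using bilinearity/continuity yields $\mathbf{x}^T A\,\mathbf{y}\ge {\mathbf{x}'}^T A\,\mathbf{y}$, so $\mathbf{x}\in\mathrm{BR}_1(\mathbf{y})$; the argument for $\mathrm{BR}_2$ is symmetric. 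Everything else is routine, the only genuinely load-bearing ingredient being the hypothesis $r_{\mathrm{ave}}\ge\min r_i$, $j_{\mathrm{ave}}\ge\min j_k$, which guarantees the constrained strategy sets are nonempty so that Kakutani can be invoked at all.
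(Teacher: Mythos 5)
Your proposal is correct and follows essentially the same route as the paper: apply Kakutani's fixed point theorem to the joint best-response correspondence on $\widehat{\mathbf{X}}\times\widehat{\mathbf{Y}}$, with the hypothesis $r_{\mathrm{ave}}\ge\min r_i$, $j_{\mathrm{ave}}\ge\min j_k$ used exactly where the paper uses it, to guarantee nonemptiness of the constrained strategy sets. Your direct closed-graph verification of upper semi-continuity is in fact a cleaner and more rigorous treatment than the paper's geometric contradiction argument, but it is the same overall proof.
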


\begin{proof}
Let $F$ be a set-valued function defined on $(\widehat{\mathbf{X}} \times \widehat{\mathbf{Y}})$,
\begin{equation}
\label{Eq_F_Function}
 F:\ (\widehat{\mathbf{X}} \times \widehat{\mathbf{Y}}) \rightrightarrows (\widehat{\mathbf{X}} \times \widehat{\mathbf{Y}})
\end{equation}
where $\widehat{\mathbf{X}}$ and $\widehat{\mathbf{Y}}$ are defined in \eqref{Eq_Player_1_Constrained_Strategy_Set} and \eqref{Eq_Player_2_Constrained_Strategy_Set}, respectively.
Obviously, $\widehat{\mathbf{X}}\subset \mathbb{R}^m$ and $\widehat{\mathbf{Y}}\subset \mathbb{R}^n$ are non-empty, closed and   convex subsets. (Note that $\widehat{\mathbf{X}}$ and $\widehat{\mathbf{Y}}$ are intersections of standard $k$-simplices and closed half spaces, furthermore, the intersections are non-empty since by assumption $r_{\text{ave}} \geq \min r_i$ and $j_{\text{ave}} \geq \min j_k$). Therefore, the subspace resulted by the Cartesian product of $\widehat{\mathbf{X}}$ and $\widehat{\mathbf{Y}}$, $( \widehat{\mathbf{X}} \times \widehat{\mathbf{Y}} )$, is also a non-empty, closed and convex subset of $\mathbb{R}^{m+n}$.
Now define $F$, such that,
\begin{equation}
\label{Eq_Set_Valued_Function_The_definition}
 F(\mathbf{x,y}) = F_x(\mathbf{y}) \times F_y(\mathbf{x}) = \{ ( \mathbf{\bar{x}, \bar{y}} ) \}  \quad  \mathbf{(x,y)} \in \widehat{\mathbf{X}} \times \widehat{\mathbf{Y}}
\end{equation}
where 
\begin{equation}
\label{Eq_x_i_argmax}
  F_x(\mathbf{y}) = \{ \mathbf{\bar x}_i \} \triangleq \underset{ \mathbf{x} \in \widehat{\mathbf{X}} }{\mathrm{argmax} }\ \mathbf{x}^T A \ \mathbf{y} 
\end{equation}
and 
\begin{equation}
\label{Eq_y_j_argmax}
  F_y(\mathbf{x}) = \{ \mathbf{\bar y}_j \} \triangleq \underset{ \mathbf{y} \in \widehat{\mathbf{Y}} }{\mathrm{argmax} }\ \mathbf{x}^T B \ \mathbf{y} 
\end{equation}
That is, $F$ maps every strategy pair $\mathbf{(x,y)} \in \widehat{\mathbf{X}} \times \widehat{\mathbf{Y}}$ to the Cartesian product of the sets $\{\mathbf{\bar x}_i\}$ and $\{\mathbf{\bar y}_j\}$ (given by $F_x(\mathbf{y})$ and $F_y(\mathbf{x})$, respectively) where all $\mathbf{\bar x}_i$'s are optimal against $\mathbf{y}$ and all $\mathbf{\bar y}_j$'s are optimal against $\mathbf{x}$.

From \eqref{Eq_x_i_argmax} and \eqref{Eq_y_j_argmax} it is clear that for all $\mathbf{(x,y)} \in \widehat{\mathbf{X}} \times \widehat{\mathbf{Y}}$ the set valued function $F_x(\mathbf{y})$ depends only on $\mathbf{y}$ and $F_y(\mathbf{x})$ depends only on $\mathbf{x}$.
Therefore, if we show that $\{\mathbf{\bar x}_i \} = F_x(\mathbf{y})$ is convexed and upper semi-continuous for every $\mathbf{y} \in \widehat{\mathbf{Y}}$ by extending the exact same argument to $\{\mathbf{\bar y}_j \} = F_y(\mathbf{x})$ we can show that $F(\mathbf{x,y})$ is convexed-valued and upper semi-continuous.

Consider $F_x(\mathbf{y})$ in~\eqref{Eq_x_i_argmax}, for any given $\mathbf{y} \in \widehat{\mathbf{Y}}$ the problem in \eqref{Eq_x_i_argmax} is a linear program in $\mathbf{x}$. Therefore, the solution is always at the intersection of some binding constraints, i.e., it is a polytope at some corner of the feasible region in the direction of the gradient of $F_x$ (see Figure~\ref{Fig_Feasible_Region_Partitioning}). As a result, the set $\{ \mathbf{\bar x}_i \}$ is either a singleton in $\widehat{\mathbf{X}}$ 
(when $\nabla F_x(\mathbf{y})$ is not normal to some face of $\widehat{\mathbf{X}}$ -- Figure~\ref{Fig_Feasible_Region_Partitioning}, top) or a face of $\widehat{\mathbf{X}}$ 
(when $\nabla F_x(\mathbf{y})$ is  normal to some face of $\widehat{\mathbf{X}}$ -- Figure~\ref{Fig_Feasible_Region_Partitioning}, bottom), in either case, the solution set is convex and compact for all $\mathbf{y} \in \widehat{\mathbf{Y}}$. By using the same argument, it is clear that $\{\mathbf{\bar y}_j\}$ is also convex and compact for all $\mathbf{x} \in \widehat{\mathbf{X}}$. Therefore, the set-valued function $F( \mathbf{x,y} )$  is also convexed-valued.

It can be shown (by contradiction) that for every sequence $\mathbf{y}_k$ in $\widehat{\mathbf{Y}}$ that converges to $\mathbf{y}$ and for every sequence $\mathbf{\bar x}_k$ that converges to $\mathbf{\bar x}$ such that $\mathbf{\bar x}_k \in F_x(\mathbf{y}_k)$ for all $k\in \mathbb{N}$ then we must have $\mathbf{\bar x} \in F_x(\mathbf{y})$. 
If $F_x$ was not upper semi-continuous then $\mathbf{\bar x} \notin F_x(\mathbf{y})$ for some sequence. 
Assume $F_x{(\mathbf{y})}$ is a singleton in $\widehat{\mathbf{X}}$ (Figure~\ref{Fig_Feasible_Region_Partitioning}, top), we can find $K$ sufficiently large to make $\mathbf{y}_K$ arbitrarily close to $\mathbf{y}$ and therefore, for all $k>K$ we have $F_x(\mathbf{y}_{k>K}) = F_x(\mathbf{y}) = \mathbf{\bar x}$ which is a contradiction. This argument can be easily extended to the case where $F_x(\mathbf{y})$ is some face of $\widehat{\mathbf{X}}$ (Figure~\ref{Fig_Feasible_Region_Partitioning}, bottom).

Therefore, $F_x(\mathbf{y})$ is upper semi-continuous in $\widehat{\mathbf{X}}$ (so is $F_y(\mathbf{x})$ in $\widehat{\mathbf{Y}}$). 
Hence, $F$ is an upper semi-continuous function 
Therefore, the set-valued function $F(\mathbf{x,y})$ defined in \eqref{Eq_Set_Valued_Function_The_definition} satisfies the requirements of Kakutani's theorem and has a fixed point $(\mathbf{x}^*, \mathbf{y}^*)$ such that
\begin{equation}
 ( {\mathbf{x}}^* , {\mathbf{y}}^* )\ \in \  F( {\mathbf{x}}^* , {\mathbf{y}}^* )
\end{equation}
that is, there exist a strategy pair $( {\mathbf{x}}^* , {\mathbf{y}}^* )$ where its elements are optimal against each other and by definition, this is an equilibrium point of $\mathcal{G}$. This concludes the proof.
\end{proof}
%
%
%
\begin{figure}
  \centering
  \includegraphics[width = .7\linewidth]{./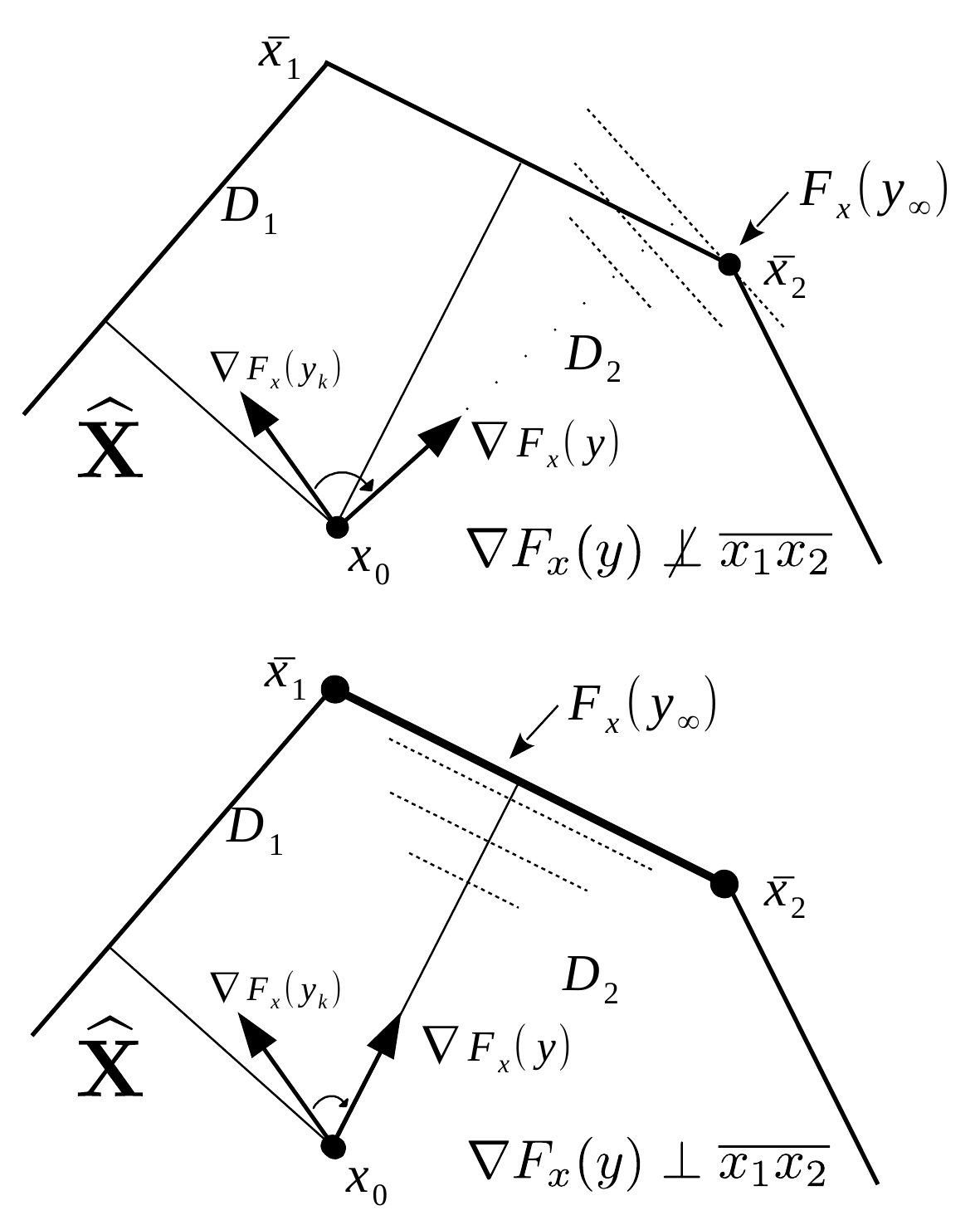} 
  \caption{Geometric representation of upper semi-continuity.}
  \label{Fig_Feasible_Region_Partitioning}
\end{figure}
%
%
%

%
\IEEEtriggeratref{24}

\vfill

\end{document}